\documentclass[11pt]{article}
\usepackage[margin=1in]{geometry}
\usepackage{amsmath,amssymb,amsfonts,amsthm,mathtools,bm}
\usepackage{graphicx}
\usepackage{booktabs}
\usepackage{float}
\usepackage{hyperref}
\usepackage{enumitem}
\usepackage{algorithm}
\usepackage{algpseudocode}
\usepackage{setspace}
\usepackage[numbers]{natbib}
\usepackage{caption}
\usepackage{subcaption}
\usepackage{xcolor}
\usepackage{array}
\usepackage{longtable}

\hypersetup{colorlinks=true,linkcolor=blue,citecolor=blue,urlcolor=blue}

\newtheorem{theorem}{Theorem}[section]
\newtheorem{proposition}[theorem]{Proposition}
\newtheorem{lemma}[theorem]{Lemma}
\newtheorem{corollary}[theorem]{Corollary}
\theoremstyle{definition}
\newtheorem{definition}[theorem]{Definition}
\newtheorem{assumption}[theorem]{Assumption}
\theoremstyle{remark}
\newtheorem{remark}[theorem]{Remark}

\begin{document}
\begin{center}
\textbf{\Large Storage Options and Endogenous Commodity Prices in Continuous Time}
 \end{center}
\begin{center}
 N. Karimi \footnote{Corresponding Author, \\ Email: nkarimi@aut.ac.ir}, E. Salavati $^{a}$, H. Adibi $^{b}$\\
\end{center}

\begin{center}
\emph{\footnotesize $^{1,a,b}$ Department of Applied Mathematics, Faculty of Mathematics and Computer Science} \\
\emph{\footnotesize Amirkabir University of Technology, No. 424, Hafez Ave.,15914, Tehran, Iran}\\
\end{center}

\begin{abstract}
We study the price formation of a storable commodity when the decision to sell or keep the commodity is treated as an embedded storage option. The price process is not imposed exogenously. Instead, a candidate price function determines the demand dynamics, while the optimal stopping value generated by those dynamics produces a new price function. We call a fixed point of this feedback mechanism an \emph{endogenous storage-option equilibrium} (ESOE), using a separate acronym to distinguish the continuous-time construction studied here from the stationary rational expectations equilibrium terminology of the classical competitive-storage literature. A global existence result is established on a compact admissible class under a transparent drift--discount condition. The argument avoids the interval-by-interval extension of local fixed points and does not claim uniqueness from Schauder's theorem. We then characterize the equilibrium as a nonlinear obstacle problem and develop a monotone finite-difference/policy-iteration method with an asymptotically consistent far-field boundary condition. The numerical analysis separates fixed-point convergence from spatial discretization error and is supplemented by Monte Carlo validation, scaling diagnostics, comparative statics, and the distribution of optimal selling times. The results clarify how demand elasticity, discounting, depreciation, and volatility jointly determine the storage premium and the selling threshold.
\end{abstract}

\noindent\textbf{Keywords:} storable commodities; endogenous prices; storage option; optimal stopping; free-boundary problem; fixed point; rational expectations equilibrium.

\noindent\textbf{JEL classification:} C61; D84; G13; Q02.

\section{Introduction}

The physical nature of commodities creates a pricing mechanism that is different from the one usually adopted for financial securities. A producer or a trader who owns a storable commodity does not have to sell it immediately. The commodity can be carried to a later date, and this simple possibility introduces an intertemporal decision into the spot market. When current prices are sufficiently attractive, selling is natural. When they are not, storage creates the possibility of waiting for a more favorable market condition. In this sense, storability provides an option whose exercise decision is tied directly to the commodity price.

The classical competitive-storage literature studies this mechanism through a rational expectations equilibrium in which current availability, inventory, demand, and expected future prices are linked. The modern formulation of the storage model goes back to \citet{gustafson1958}, \citet{muth1961}, and \citet{samuelson1971}, and was developed in a particularly influential form by \citet{deaton1992,deaton1995,deaton1996,deaton2003}. In the Deaton--Laroque framework the equilibrium price is not an arbitrary stochastic process. It is determined by the economic primitives of the market and by the storage decision itself. This feature remains one of the main attractions of structural storage models, especially when they are compared with reduced-form commodity models that postulate an exogenous mean-reverting spot price.

A continuous-time formulation is attractive for two reasons. First, it permits the storage decision to be written as an optimal stopping problem and therefore makes the relation between storability and option value explicit. Second, it provides a natural route to free-boundary methods, stochastic calculus, and the numerical techniques developed for perpetual American-style claims. The difficulty is that the state dynamics and the value function are not independent. A candidate speculative price affects the demand process, while the optimal stopping value associated with that process gives a new speculative price. The equilibrium is consequently a fixed point of an optimal-stopping operator.

Methodologically, the present formulation also belongs to a broader line of recent work in which continuous-time stochastic models are coupled with explicit decision rules and computational schemes. Examples include long-memory and stochastic-time-change models for temperature-index insurance \citep{karimiShokrollahi2026}, stochastic dynamic characterization of bank insolvency regions under interacting liquidity and credit risks \citep{karimiAhmadian2026}, computational analysis of functionally generated portfolios under stochastic transaction costs \citep{karimiSalavati2026}, and stochastic Runge--Kutta approximation for processes driven by mixed fractional Brownian motion \citep{karimiShahmoradi2025}. These studies address different applications, but they share the principle that stochastic dynamics, economically meaningful constraints, and numerical implementation should be analyzed within a single coherent framework. In the storage problem considered here, that principle takes the specific form of a feedback between the endogenous state dynamics and an optimal stopping value.

A short version of this continuous-time idea, together with a contraction result and a first numerical algorithm, appeared in \citet{karimi2024}. The present article is not intended to repeat that letter. It develops the model in a substantially more complete form and addresses several points that are important when the equilibrium is used as a stand-alone mathematical and computational model. In particular, the global existence argument is rebuilt on a compact admissible class in the topology of locally uniform convergence. This removes the need to construct fixed points separately on $[x_0,N]$ and then assume that those local fixed points are mutually consistent when $N$ changes. The equilibrium operator is shown to map the admissible class into itself, and the continuity needed for a Schauder--Tychonoff argument is proved by combining stability of one-dimensional diffusions with a finite-horizon truncation of the stopping problem. The resulting theorem proves existence only. Uniqueness is kept logically separate and is attached to a contraction condition, as it should be.

There is also a terminological point that is useful in the present paper. The abbreviation SREE is already associated with the stationary rational expectations equilibrium of the Deaton--Laroque competitive-storage model. Our fixed point is closely related to that concept, but its mathematical construction is specifically a continuous-time feedback between a storage-option value and an endogenous demand diffusion. To avoid using the same abbreviation for two objects introduced in different formulations, we call the fixed point studied here an \emph{endogenous storage-option equilibrium}, abbreviated \emph{ESOE}. This is a paper-specific label rather than a claim that a new general equilibrium concept is being introduced.

The numerical part is also revised in an important way. A one-step time discretization of the Bellman relation can be useful, but it mixes three separate errors: the fixed-point iteration error, the time-discretization error, and the state-space truncation error. Moreover, an artificial zero value imposed at a moderate upper state boundary can substantially depress the storage value because a producer starting from a large inventory may still wait until the state returns to the selling region. We therefore solve, at every outer fixed-point iteration, the stationary obstacle problem associated with the current candidate price function. A monotone finite-difference discretization is combined with policy iteration, and the far-field condition is obtained from the negative characteristic exponent of the asymptotic diffusion. This gives a direct numerical approximation of the infinite-horizon stopping operator and leads to a clean error decomposition.

The expanded analysis produces several observations that are difficult to see from a small set of equilibrium curves alone. Normalization by $b$ and $b/a$ reveals an almost exact scaling collapse with respect to the linear inverse-demand parameters. The storage premium increases with volatility, while the selling boundary moves only modestly for the baseline parameter range. Discounting has a much stronger effect on the boundary: a larger discount rate makes waiting less valuable and enlarges the immediate-selling region. Monte Carlo first-passage simulations agree closely with the finite-difference values, and the distribution of the optimal selling time shows that volatility mainly spreads the timing distribution rather than simply shifting its mean.

The paper is organized as follows. Section~\ref{sec:literature} places the model in the storage and stochastic-control literature. Section~\ref{sec:model} formulates the continuous-time feedback mechanism and introduces ESOE. Section~\ref{sec:existence} proves well-posedness of the demand diffusion and gives the global existence theorem. Section~\ref{sec:freeboundary} discusses the obstacle formulation, the selling boundary, and uniqueness under a separate contraction condition. Section~\ref{sec:numerics} develops the numerical scheme and states the error decomposition. Section~\ref{sec:results} presents the numerical and statistical results. Section~\ref{sec:conclusion} concludes. Technical details of the compactness and continuity arguments and of the finite-difference operator are collected in the appendices.

\section{Related literature}\label{sec:literature}

The theory of commodity storage has a long history. The early contribution of \citet{gustafson1958} already emphasized that inventory decisions have to be considered intertemporally. Rational expectations entered the discussion through \citet{muth1961} and \citet{samuelson1971}. The competitive-storage model was subsequently developed in \citet{williamswright1991} and in the sequence of papers by \citet{deaton1992,deaton1995,deaton1996,deaton2003}. In those models the non-negativity of inventory creates a kink in the equilibrium pricing rule and produces state-dependent price dynamics. The model therefore links the observed price to scarcity, inventories, and expectations rather than treating prices as an exogenous diffusion.

The empirical literature has shown both the usefulness and the difficulty of that structural approach. \citet{michaelides2000} compare simulation-based estimators, while \citet{cafiero2011,cafiero2014} study the empirical relevance and likelihood-based estimation of the storage model. More recently, \citet{osmundsen2021} combine the competitive-storage model with a stochastic trend in a state-space specification. A particularly important recent contribution is \citet{gouel2025}, who use indirect inference and information on both prices and quantities to provide a broad empirical assessment of a rational expectations commodity-storage model. Their results reinforce the point that storage models remain empirically relevant when the structural shocks and identifying information are rich enough.

A second branch of the literature values storage facilities or storage flexibility under exogenous prices. In gas markets, \citet{chen2008} formulate storage as a stochastic control problem and develop a semi-Lagrangian numerical method, while \citet{chen2010} introduce regime switching. \citet{carmona2010} study energy-storage valuation through optimal switching, and \citet{warin2012} analyzes gas-storage hedging. \citet{cummins2018} consider multifactor L\'evy dynamics. These models are natural when the storage operator is small relative to the market and takes prices as given. The present paper has a different objective: the storage decision feeds back into the demand state and the commodity price is generated endogenously.

Continuous-time competitive-storage models are less common. \citet{trofimov2022} derives a continuous-time competitive-storage equilibrium in which no-arbitrage and no-trade conditions characterize the trading zones. Our formulation is different because it is built directly around a perpetual optimal-stopping operator whose coefficients depend on the candidate equilibrium price. The related article \citet{karimi2024} introduced this stopping/fixed-point formulation in a short format. The current study expands that framework by giving a global compactness-based existence result, an asymptotic boundary condition, a direct stationary obstacle solver, and a substantially broader numerical investigation.

The recent literature also shows renewed interest in the strategic and operational role of storage. \citet{balakin2025} study dynamic trading strategies for a storage unit in an oligopolistic market with demand shocks and a capacity constraint. Their analysis is strategic and the price effect of the storage unit is explicit. Our model remains a competitive reduced-form equilibrium of demand and storage, but the two approaches share the view that storage is not merely a passive physical asset. It changes the timing of market participation and therefore changes the price formation mechanism.

Finally, the mathematical tools used below belong to the literature on optimal stopping and free-boundary problems. We use the standard Snell-envelope interpretation of perpetual stopping values as developed, for example, in \citet{peskir2006}. The numerical obstacle problem is treated by a monotone discretization so that the discrete operator has the sign structure of an $M$-matrix. This is especially useful near a free boundary, where high nominal order is less important than monotonicity and stability. The convergence discussion follows the familiar principle that a numerical approximation of a contraction has two distinct sources of error: the error of approximating the operator and the error of terminating the fixed-point iteration.

\section{Continuous-time storage model and the ESOE concept}\label{sec:model}

This section specifies the economic feedback that drives the model and fixes the notation used in the theoretical analysis. We first construct the demand state and the storage-option operator for a given candidate price, and then define the admissible class on which the endogenous equilibrium fixed point is sought.

\subsection{Economic construction}

We work on a filtered probability space $(\Omega,\mathcal F,\{\mathcal F_t\}_{t\ge0},\mathbb{P})$ supporting a standard Brownian motion $B$. The state $X_t$ represents the amount available to the market, or equivalently the demand-state variable used to determine the fundamental price. The inverse demand function is taken to be
\begin{equation}\label{eq:P}
P(x)=(b-ax)^+, \qquad a>0,\quad b>0.
\end{equation}
The upper price level is $b$ and $b/a$ is the quantity at which the fundamental price reaches zero. This specification is deliberately kept close to the competitive-storage literature and to the original version of the model.

The first part of the storage decision can be understood without introducing the state dynamics. If $p_t$ is the market price and the holder of the commodity may either sell now or postpone the sale, the discrete-time rationality condition has the form
\[
\widetilde p_t=\max\{e^{-rt}P(X_t),\mathbb{E}_t[\widetilde p_{t+1}]\},
\]
where $\widetilde p_t=e^{-rt}p_t$. The continuous-time analogue is the Snell envelope
\begin{equation}\label{eq:snell}
\widetilde p_t=\operatorname*{ess\,sup}_{\tau\ge t}
\mathbb{E}_t\!\left[e^{-r\tau}P(X_\tau)\right].
\end{equation}
This relation describes the storage option for a given state process. The second part of the model makes the state process itself depend on the speculative price.

To avoid an ambiguity that is present when one writes $P^{-1}$ for the truncated linear function in \eqref{eq:P}, we use the following generalized inverse throughout the paper:
\begin{equation}\label{eq:Q}
Q(p)=\frac{b-\Pi_{[0,b]}(p)}{a},
\qquad \Pi_{[0,b]}(p)=\min\{b,\max\{0,p\}\}.
\end{equation}
For $0\le p\le b$, $Q(p)=(b-p)/a$. The definition remains meaningful at $p=0$, where the ordinary inverse of $(b-ax)^+$ is not single-valued.

Suppose that $h$ is a candidate speculative price function. The associated demand process is defined by
\begin{equation}\label{eq:sdeh}
 dX_t^h=\mu_h(X_t^h)\,dt+\sigma X_t^h\,dB_t,\qquad X_0^h=x\ge0,
\end{equation}
with
\begin{equation}\label{eq:drift}
\mu_h(x)=(m-\delta)x-(1-\delta)Q(h(x)).
\end{equation}
Here $\delta\in[0,1)$ is the depreciation rate, $m$ is the mean component of the supply/demand shock in the continuous-time scaling, and $\sigma>0$ controls the volatility. The term $Q(h(x))$ is the current quantity released to the market at candidate price $h(x)$. The multiplicative diffusion coefficient reflects the scale of the market and, as shown below, is also useful for preserving non-negativity.

For a fixed candidate $h$, the storage-option value is
\begin{equation}\label{eq:Th}
(\mathcal{T} h)(x)=\sup_{\tau\in\mathfrak T}
\mathbb{E}_x\!\left[e^{-r\tau}P(X_\tau^h)\right],
\end{equation}
where $\mathfrak T$ is the set of almost surely finite or infinite stopping times and the convention $e^{-r\infty}P(X_\infty)=0$ is used. The operator $\mathcal{T}$ closes the feedback loop: $h$ determines the dynamics, and the dynamics determine the new storage value $\mathcal{T} h$.

\subsection{Admissible price functions}

The fixed-point problem should be posed on a function class for which the diffusion is well defined and the stopping operator has good compactness properties. We use
\begin{equation}\label{eq:H}
\mathcal{H}=\left\{h\in C([0,\infty)):
P\le h\le b,\ h(0)=b,\ h\ \text{is non-increasing},\
|h(x)-h(y)|\le a|x-y|\right\}.
\end{equation}
The Lipschitz constant $a$ is not an arbitrary regularization. It is exactly the slope of the fundamental inverse-demand curve. Section~\ref{sec:existence} shows that, under the drift--discount condition used for existence, the stopping operator naturally preserves this bound.

Because $h\ge P$, the generalized inverse satisfies
\begin{equation}\label{eq:qbound}
0\le Q(h(x))\le x,\qquad x\ge0.
\end{equation}
Indeed, for $x\le b/a$, $h(x)\ge b-ax$ gives $Q(h(x))\le x$, and for $x>b/a$ the bound follows from $Q(h(x))\le b/a<x$.

\begin{definition}[Endogenous storage-option equilibrium]\label{def:esoe}
A function $f\in\mathcal{H}$ is called an \emph{endogenous storage-option equilibrium} (ESOE) if
\begin{equation}\label{eq:fixedpoint}
\mathcal{T} f=f.
\end{equation}
The abbreviation ESOE is used in this paper to separate the continuous-time storage-option fixed point from the SREE abbreviation traditionally used for the stationary rational expectations equilibrium of the discrete competitive-storage model.
\end{definition}

At an ESOE the candidate price used in the demand dynamics and the value generated by the optimal selling rule coincide. Consequently neither the speculative price function nor the associated demand diffusion is imposed exogenously.

\section{Well-posedness and global equilibrium existence}\label{sec:existence}

The existence result is the main theoretical change relative to the old long manuscript. The proof is carried out directly on $[0,\infty)$ in the compact-open topology. There is no need to define a separate fixed-point problem on every interval $[x_0,N]$, and therefore no consistency assumption between fixed points corresponding to different values of $N$ is required.

\subsection{The demand diffusion}

We begin by establishing that the state equation is mathematically well posed and remains in the economically meaningful non-negative state space. These properties are needed before the stopping operator can be defined and compared across different initial states.

\begin{proposition}[Well-posedness and positivity]\label{prop:sde}
Let $h\in\mathcal{H}$. Then \eqref{eq:sdeh} has a unique strong solution for every $x\ge0$. Moreover $X_t^h\ge0$ almost surely for every $t\ge0$. If $x>0$, then $X_t^h>0$ almost surely for every finite $t$.
\end{proposition}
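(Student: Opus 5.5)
The plan is to extend the coefficients to all of $\mathbb{R}$ so that standard Lipschitz SDE theory applies, and then to prove a posteriori that the solution never leaves $[0,\infty)$. Because $h\in\mathcal{H}$ is only defined on $[0,\infty)$, set $\bar h(x)=h(\max\{x,0\})$ and consider
\[
dY_t=\bar\mu(Y_t)\,dt+\sigma Y_t\,dB_t,\qquad \bar\mu(y)=(m-\delta)y-(1-\delta)Q(\bar h(y)).
\]
The projection $\Pi_{[0,b]}$ is $1$-Lipschitz, so $Q$ is $(1/a)$-Lipschitz. Since $h$ is $a$-Lipschitz, $x\mapsto Q(h(x))$ is $1$-Lipschitz and bounded by $b/a$. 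Hence $\bar\mu$ is globally Lipschitz with constant $|m-\delta|+(1-\delta)$, and $y\mapsto\sigma y$ is linear. The It\^o existence and uniqueness theorem then gives a unique strong solution $Y$ for every initial point. Once $Y\ge0$ is shown, $Y$ solves \eqref{eq:sdeh} and pathwise uniqueness carries over, so $X^h=Y$.

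\emph{Non-negativity.} The key fact is \eqref{eq:qbound} at $x=0$: $h(0)=b$ gives $Q(h(0))=0$, hence $\bar\mu(0)=0$. So $y\equiv0$ is the unique solution started at $0$, which settles the case $x=0$. For $x>0$, the comparison theorem for one-dimensional SDEs (Ikeda--Watanabe/Yamada) applies: both equations have the same Lipschitz diffusion coefficient, and the drift satisfies $\bar\mu(y)\ge \bar\mu_0(y)$ with $\bar\mu_0\equiv\bar\mu$ and the zero solution. Comparing $Y$ with the solution $0$ started from $0\le x$ gives $Y_t\ge0$ a.s. for all $t$.

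\emph{Strict positivity.} The comparison argument only gives $Y\ge0$, so a further step is needed. For $y\in[0,b/a]$,
\[
Q(h(y))=Q(h(y))-Q(h(0))\le y
\]
by the $1$-Lipschitz property, so $\bar\mu(y)\ge (m-\delta-(1-\delta))y=(m-1)y$ near $0$. On the event before the exit time of $(0,b/a)$, apply It\^o's formula to $\log Y_t$ up to $\tau_\varepsilon=\inf\{t:Y_t\le\varepsilon\}$:
\[
d\log Y_t=\Bigl(\frac{\bar\mu(Y_t)}{Y_t}-\tfrac{\sigma^2}{2}\Bigr)dt+\sigma\,dB_t .
\]
The drift term is bounded below by $m-1-\sigma^2/2$ while $Y\le b/a$. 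Above $b/a$ the ratio $\bar\mu(y)/y\ge (m-\delta)-(1-\delta)b/(ay)$ is also bounded below as long as $y\ge\varepsilon$. So the drift is bounded below by a constant on $\{Y\ge \varepsilon\}$ in all cases, and indeed by $\min\{m-1,\,m-\delta-(1-\delta)\}-\sigma^2/2=m-1-\sigma^2/2$ everywhere on $(0,\infty)$. Thus, up to $\tau_0=\lim_\varepsilon\tau_\varepsilon$,
\[
\log Y_{t}\ge \log x+(m-1-\sigma^2/2)t+\sigma B_t ,
\]
which stays finite on finite intervals. Hence $\tau_0=\infty$ a.s., and $Y_t>0$ for every finite $t$.

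An alternative for this step is direct comparison with the geometric Brownian motion $dZ=(m-1)Z\,dt+\sigma Z\,dB$, $Z_0=x$, which is strictly positive. Since $\bar\mu(y)\ge (m-1)y$ for all $y\ge0$, the comparison theorem gives $Y\ge Z>0$. This route is cleaner, and I would use it in place of the logarithm computation, keeping the latter only as a check.

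The main obstacle I anticipate is verifying the hypotheses of the comparison theorem cleanly. This means checking the bound $Q(h(y))\le y$ for all $y\ge0$, which is exactly \eqref{eq:qbound}, and handling the extension to negative $y$ so that the comparison is applied to an equation on all of $\mathbb{R}$ with Lipschitz coefficients.
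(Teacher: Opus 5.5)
Your proposal is correct and follows essentially the same route as the paper: extend by $h\equiv b$ on $(-\infty,0)$ to get globally Lipschitz coefficients, settle $x=0$ via $\mu_h(0)=0$ and pathwise uniqueness, and obtain strict positivity by comparison with the geometric Brownian motion with drift $(m-1)x$ using \eqref{eq:qbound}. The negative-axis issue you flag is harmless: since $\bar\mu$ is Lipschitz, the comparison only needs $\bar\mu\ge(m-1)y$ along the path of the strictly positive comparison process (or along $Y$ once $Y\ge0$ is known).
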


\begin{proof}
We separate well-posedness from positivity. Because every $h\in\mathcal{H}$ takes values in $[0,b]$, the generalized inverse is simply $Q(p)=(b-p)/a$ on the range of $h$. Hence, for $x,y\ge0$,
\[
|Q(h(x))-Q(h(y))|\le \frac{1}{a}|h(x)-h(y)|\le |x-y|.
\]
It follows from \eqref{eq:drift} that
\[
|\mu_h(x)-\mu_h(y)|
\le \bigl(|m-\delta|+1-\delta\bigr)|x-y|,
\qquad x,y\ge0.
\]
Thus the drift is Lipschitz on the state space, with a constant independent of the particular $h\in\mathcal{H}$, and it has at most linear growth. The diffusion coefficient $x\mapsto\sigma x$ is also globally Lipschitz and of linear growth. If desired, one may extend $h$ to $(-\infty,0)$ by setting $h(x)=b$ there; this preserves the Lipschitz property and gives a globally Lipschitz extension of the coefficients to $\mathbb{R}$. The standard existence-and-pathwise-uniqueness theorem for SDEs therefore yields a unique non-explosive strong solution.

It remains to verify that the solution cannot leave $[0,\infty)$. From \eqref{eq:qbound}, for $x\ge0$,
\[
\mu_h(x)=(m-\delta)x-(1-\delta)Q(h(x))
\ge (m-\delta)x-(1-\delta)x=(m-1)x.
\]
Consider the comparison process
\[
dY_t=(m-1)Y_t\,dt+\sigma Y_t\,dB_t,\qquad Y_0=x.
\]
It has the explicit solution
\[
Y_t=x\exp\!\left(\left(m-1-\frac{\sigma^2}{2}\right)t+\sigma B_t\right).
\]
The drift inequality above and the common diffusion coefficient $\sigma x$ permit the one-dimensional comparison theorem to be applied up to every finite time. Consequently $X_t^{h,x}\ge Y_t$ almost surely. If $x>0$, the displayed formula gives $Y_t>0$ for every finite $t$, and therefore $X_t^{h,x}>0$ almost surely as well. If $x=0$, then $h(0)=b$ and $Q(h(0))=0$, so $\mu_h(0)=0$ and the diffusion coefficient also vanishes at zero. Hence the identically zero process is a solution, and pathwise uniqueness implies $X_t^{h,0}=0$ for all $t\ge0$. This proves both non-negativity and the stated strict positivity for positive initial data.
\end{proof}

The next lemma records an order property needed below.

\begin{lemma}[Order in the initial state]\label{lem:order}
For a fixed $h\in\mathcal{H}$ and initial states $0\le x\le y$, let $X^{h,x}$ and $X^{h,y}$ be driven by the same Brownian motion. Then
\[
X_t^{h,x}\le X_t^{h,y}\quad\text{for all }t\ge0\quad\text{a.s.}
\]
\end{lemma}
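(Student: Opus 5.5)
The plan is to deduce the statement from the one-dimensional comparison theorem for SDEs with a common diffusion coefficient, or to prove it directly through a local-time argument for the difference of the two solutions.

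By Proposition~\ref{prop:sde}, both $X^{h,x}$ and $X^{h,y}$ are unique strong solutions of \eqref{eq:sdeh} on the same probability space, driven by the same $B$, and both stay in $[0,\infty)$. We work with the globally Lipschitz extension of $\mu_h$ to $\mathbb{R}$ from that proof. The two processes then solve the same SDE with the same drift $\mu_h$, the same diffusion coefficient $\sigma z$, and ordered initial data $x\le y$.

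Set $D_t=X_t^{h,x}-X_t^{h,y}$, so that $D_0=x-y\le0$ and
\[
dD_t=\bigl(\mu_h(X_t^{h,x})-\mu_h(X_t^{h,y})\bigr)\,dt+\sigma D_t\,dB_t .
\]
The aim is to show $D_t^+=0$ for all $t$. I would use the Yamada--Watanabe/Le Gall device. The diffusion coefficient $\sigma z$ is Lipschitz, so $\int_{0+}\rho(u)^{-2}\,du=\infty$ holds with $\rho(u)=\sigma u$. Hence the local time of $D$ at $0$ vanishes, and Tanaka's formula gives
\[
D_t^+=D_0^+ +\int_0^t \mathbf 1_{\{D_s>0\}}\bigl(\mu_h(X_s^{h,x})-\mu_h(X_s^{h,y})\bigr)\,ds+\int_0^t \mathbf 1_{\{D_s>0\}}\sigma D_s\,dB_s .
\]

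Next we take expectations, after localizing at the exit times of $[-n,n]$ so that the stochastic integral is a martingale. The Lipschitz bound for $\mu_h$ with constant $L=|m-\delta|+1-\delta$ gives
\[
\mathbb{E}[D_{t\wedge\tau_n}^+]\le D_0^+ + L\int_0^t \mathbb{E}[D_{s\wedge\tau_n}^+]\,ds .
\]
Since $D_0^+=0$, Gronwall yields $\mathbb{E}[D^+_{t\wedge\tau_n}]=0$. Letting $n\to\infty$ is harmless because the solutions are non-explosive, and we get $D_t^+=0$ almost surely for each fixed $t$. Continuity of paths upgrades this to the statement simultaneously for all $t\ge0$ almost surely, by intersecting over rational $t$.

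An alternative is simply to cite the comparison theorem already used in Proposition~\ref{prop:sde}, in the form of Ikeda--Watanabe or Karatzas--Shreve Prop.~5.2.18, with identical drifts and ordered initial points. I would present the short argument above for completeness.

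The only delicate point is the vanishing of the local time of $D$ at zero, equivalently justifying Tanaka's formula without a local-time term. I would handle it by the standard approximation of $z\mapsto z^+$ by smooth convex functions $\phi_k$ with $\phi_k''(z)\le 2/(k z)$ on a shrinking band $(a_{k},a_{k-1})$. Because the quadratic-variation density is $\sigma^2 D^2$, the correction term is bounded by $\sigma^2 t/k\to0$.
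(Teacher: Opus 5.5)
Your proof is correct, but it takes a different route from the paper.

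\textbf{The paper's argument.} The paper uses a non-crossing (coalescence) argument. It sets $\tau=\inf\{t\ge0:X_t>Y_t\}$ and notes that continuity forces $X_\tau=Y_\tau$ on $\{\tau<\infty\}$. It then invokes pathwise uniqueness for the equation restarted at $\tau$, with the same coefficients, the shifted Brownian motion and a common $\mathcal F_\tau$-measurable initial value. The two paths therefore coincide from $\tau$ on, which contradicts the definition of $\tau$.

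\textbf{Your argument.} You reprove the comparison theorem itself:
\begin{enumerate}
\item Tanaka's formula for $D^+$.
\item Vanishing of $L^0(D)$ via the Yamada--Watanabe/Le Gall criterion.
\item Localization and Gronwall for $\mathbb{E}[D^+_{t\wedge\tau_n}]$.
\item Fatou's lemma and path continuity to conclude for all $t$.
\end{enumerate}
The step you flag as delicate is in fact immediate here, since $\int_0^t \mathbf 1_{\{D_s>0\}}D_s^{-2}\,d\langle D\rangle_s=\sigma^2 t<\infty$.

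\textbf{What each route buys.}
\begin{itemize}
\item The paper's route is shorter and needs nothing beyond well-posedness. It relies on both processes solving the \emph{same} equation, and it implicitly uses a strong-Markov restart: pathwise uniqueness for a solution started at a stopping time from a random state, which the paper does not spell out.
\item Your route works only at deterministic times and needs no restart. On $\{D_s>0\}$ it uses only a one-sided drift bound together with a Lipschitz estimate. The same computation is therefore the standard proof of comparison for \emph{different} drifts, such as the comparison with the geometric Brownian motion in Proposition~\ref{prop:sde}, where coalescence gives nothing.
\item The cost of your route is local-time machinery that is heavier than this lemma, with a linear diffusion coefficient and identical drifts, strictly requires.
\end{itemize}
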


\begin{proof}
Write $X_t=X_t^{h,x}$ and $Y_t=X_t^{h,y}$, and drive both equations by the same Brownian motion. Their sample paths are continuous and, by Proposition~\ref{prop:sde}, pathwise uniqueness holds. Suppose, for contradiction, that the order is violated with positive probability. On that event define the first crossing time
\[
\tau=\inf\{t\ge0:X_t>Y_t\}.
\]
Since $X_0\le Y_0$ and both paths are continuous, any strict crossing must be preceded by a meeting. More precisely, on $\{\tau<\infty\}$ continuity gives $X_\tau=Y_\tau$. Starting at time $\tau$ from this common random state and using the same Brownian increments, both processes solve the same SDE with the same initial condition. Pathwise uniqueness then implies
\[
X_{\tau+t}=Y_{\tau+t},\qquad t\ge0,
\]
almost surely on $\{\tau<\infty\}$. This contradicts the definition of a time at which $X$ crosses strictly above $Y$. Hence no crossing can occur and $X_t^{h,x}\le X_t^{h,y}$ for all $t\ge0$ almost surely.
\end{proof}

\subsection{The stopping operator preserves the admissible class}

The following condition has a direct economic interpretation. It says that the deterministic growth component of the state is not large enough to dominate depreciation plus discounting.

\begin{assumption}\label{ass:driftdisc}
Throughout the existence result,
\begin{equation}\label{eq:condition}
m\le r+\delta.
\end{equation}
\end{assumption}

\begin{proposition}[Invariance of $\mathcal{H}$]\label{prop:invariance}
Under Assumption~\ref{ass:driftdisc}, $\mathcal{T}$ maps $\mathcal{H}$ into itself.
\end{proposition}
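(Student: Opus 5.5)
We need to show that $g=\mathcal T h$ lies in $\mathcal H$ for every $h\in\mathcal H$. This means checking, in turn, that $P\le g\le b$, that $g(0)=b$, that $g$ is non-increasing, that $g$ is $a$-Lipschitz, and that $g$ is continuous (continuity follows from the Lipschitz bound).

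\emph{Bounds and boundary value.} Taking $\tau=0$ in \eqref{eq:Th} gives $g\ge P$. Since $0\le P\le b$ and $e^{-r\tau}\le1$, we also get $g\le b$. At $x=0$, Proposition~\ref{prop:sde} gives $X^{h,0}\equiv0$, so $P(X_\tau)=b$ for every $\tau$. The supremum is then attained at $\tau=0$, and $g(0)=b$.

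\emph{Monotonicity.} Fix $x\le y$ and drive both processes by the same Brownian motion. Lemma~\ref{lem:order} gives $X^{h,x}_t\le X^{h,y}_t$ pathwise. Since $P$ is non-increasing, for any stopping time $\tau$ we have $e^{-r\tau}P(X^{h,x}_\tau)\ge e^{-r\tau}P(X^{h,y}_\tau)$. The stopping times are defined on the common filtration of $B$, so the same $\tau$ is admissible for both starting points. Taking expectations and then the supremum over $\tau$ yields $g(x)\ge g(y)$.

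\emph{Lipschitz bound (main step).} Since $P$ is $a$-Lipschitz and the same $\tau$ can be used for both starting points,
\[
|g(x)-g(y)|\le \sup_\tau \mathbb E\bigl[e^{-r\tau}a\,|X^{h,y}_\tau-X^{h,x}_\tau|\bigr].
\]
By Lemma~\ref{lem:order}, $D_t=X^{h,y}_t-X^{h,x}_t\ge0$, and it satisfies
\[
dD_t=\bigl[(m-\delta)D_t-(1-\delta)\bigl(Q(h(X^y_t))-Q(h(X^x_t))\bigr)\bigr]dt+\sigma D_t\,dB_t .
\]
Because $h$ is non-increasing, $Q\circ h$ is non-decreasing, so the bracketed $Q$-difference is $\ge0$. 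Hence the drift of $D$ is at most $(m-\delta)D_t$. Itô's formula applied to $e^{-rt}D_t$ then shows that it is a nonnegative local supermartingale, because its drift coefficient is at most $(m-\delta-r)D_t\le0$ by Assumption~\ref{ass:driftdisc}. A nonnegative local supermartingale is a true supermartingale, and Fatou handles $\tau=\infty$ via the convention that the payoff there is zero. Optional sampling therefore gives $\mathbb E[e^{-r\tau}D_\tau]\le D_0=y-x$. This yields $|g(x)-g(y)|\le a|x-y|$.

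The main obstacle is this last step, specifically justifying optional sampling for unbounded or infinite $\tau$. I would handle it by stopping at $\tau\wedge n\wedge\sigma_k$, where $\sigma_k$ is a localizing sequence, and passing to the limit with Fatou's lemma. On $\{\tau=\infty\}$ the payoff difference vanishes by the convention in \eqref{eq:Th}, so these paths contribute nothing.

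Continuity of $g$ on $[0,\infty)$ follows from the Lipschitz bound. Together with the other properties, this gives $g\in\mathcal H$.
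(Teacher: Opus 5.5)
Your proposal is correct and follows the paper's proof essentially step by step. Immediate stopping gives the bounds, $X^{h,0}\equiv0$ gives the boundary value, the order lemma gives monotonicity, and the $a$-Lipschitz bound comes from $e^{-rt}D_t$ being a nonnegative supermartingale under $m\le r+\delta$, with Fatou handling unbounded and infinite $\tau$. The differences are only cosmetic: you use the sign of the $Q$-difference directly where the paper linearizes through $\theta_t\in[-a,0]$, and you cite the fact that nonnegative local supermartingales are true supermartingales where the paper localizes explicitly with $\rho_n$.
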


\begin{proof}
We verify the defining properties of $\mathcal{H}$ one by one. Since $0\le P\le b$, every discounted stopping payoff lies in $[0,b]$. Immediate stopping, $\tau=0$, is admissible, so
\[
P(x)\le (\mathcal{T}h)(x)\le b,
\qquad x\ge0.
\]
At $x=0$, Proposition~\ref{prop:sde} gives $X_t^{h,0}=0$. Therefore
\[
(\mathcal{T}h)(0)=\sup_{\tau}\mathbb{E}\!\left[e^{-r\tau}P(0)\right]
=\sup_{\tau}b\,\mathbb{E}[e^{-r\tau}]=b,
\]
where the maximum is attained by $\tau=0$.

Next take $0\le x\le y$ and construct $X^{h,x}$ and $X^{h,y}$ with the same Brownian motion. Lemma~\ref{lem:order} yields $X_t^{h,x}\le X_t^{h,y}$ for every $t$. Since $P$ is non-increasing, for every stopping time $\tau$,
\[
\mathbb{E}\!\left[e^{-r\tau}P(X_\tau^{h,x})\right]
\ge
\mathbb{E}\!\left[e^{-r\tau}P(X_\tau^{h,y})\right].
\]
Taking the supremum over the same class of stopping times on both sides proves that $\mathcal{T}h$ is non-increasing.

It remains to establish the Lipschitz constant $a$. Put
\[
D_t=X_t^{h,y}-X_t^{h,x}\ge0.
\]
Subtracting the two SDEs and using $Q(h(z))=(b-h(z))/a$ on the range of $h$ gives
\[
dD_t=\left[(m-\delta)D_t+\frac{1-\delta}{a}
\bigl(h(X_t^{h,y})-h(X_t^{h,x})\bigr)\right]dt
+\sigma D_t\,dB_t.
\]
Whenever $D_t>0$, define
\[
\theta_t=
\frac{h(X_t^{h,y})-h(X_t^{h,x})}{D_t},
\]
and set $\theta_t=0$ when $D_t=0$. Because $h$ is non-increasing and $a$-Lipschitz, $-a\le\theta_t\le0$. Thus
\[
dD_t=\left[(m-\delta)+\frac{1-\delta}{a}\theta_t\right]D_t\,dt
+\sigma D_t\,dB_t,
\]
with a drift coefficient bounded above by $m-\delta$. Applying It\^o's formula to $e^{-rt}D_t$ yields
\[
d(e^{-rt}D_t)
=e^{-rt}\left[m-\delta-r+\frac{1-\delta}{a}\theta_t\right]D_t\,dt
+\sigma e^{-rt}D_t\,dB_t.
\]
Under Assumption~\ref{ass:driftdisc}, the finite-variation term is non-positive. Let
\[
\rho_n=\inf\{t\ge0:D_t\ge n\}\wedge n.
\]
The stopped stochastic integral is a true martingale, and therefore for every bounded stopping time $\tau$,
\[
\mathbb{E}\!\left[e^{-r(\tau\wedge\rho_n)}D_{\tau\wedge\rho_n}\right]
\le D_0=y-x.
\]
Letting $n\to\infty$ and using Fatou's lemma gives the same estimate for bounded $\tau$. For a general stopping time, apply the bound to $\tau\wedge N$ and let $N\to\infty$, again using non-negativity. Hence
\begin{equation}\label{eq:superD}
\mathbb{E}[e^{-r\tau}D_\tau]\le y-x
\end{equation}
for every admissible stopping time, with the convention at infinity used in the definition of $\mathcal{T}$.

Finally, $P$ is $a$-Lipschitz. For the stopping-payoff functionals $F_x(\tau)=\mathbb{E}[e^{-r\tau}P(X_\tau^{h,x})]$ and $F_y(\tau)$, the elementary inequality $\sup_\tau F_x(\tau)-\sup_\tau F_y(\tau)\le\sup_\tau(F_x(\tau)-F_y(\tau))$ gives
\[
\begin{aligned}
0\le (\mathcal{T}h)(x)-(\mathcal{T}h)(y)
&\le \sup_\tau\mathbb{E}\!\left[e^{-r\tau}
\bigl(P(X_\tau^{h,x})-P(X_\tau^{h,y})\bigr)\right]\\
&\le a\sup_\tau\mathbb{E}[e^{-r\tau}D_\tau]\\
&\le a(y-x).
\end{aligned}
\]
Thus $\mathcal{T}h$ is bounded between $P$ and $b$, equals $b$ at zero, is non-increasing, and has Lipschitz constant at most $a$. Therefore $\mathcal{T}h\in\mathcal{H}$.
\end{proof}

\subsection{Compactness and continuity}

We equip $C([0,\infty))$ with the topology of locally uniform convergence. A convenient metric is
\[
d_{\mathrm{loc}}(f,g)=\sum_{k=1}^{\infty}2^{-k}
\frac{\sup_{0\le x\le k}|f(x)-g(x)|}{1+\sup_{0\le x\le k}|f(x)-g(x)|}.
\]
With this topology, $C([0,\infty))$ is a locally convex Fr\'echet space.

\begin{lemma}[Compactness of the admissible set]\label{lem:compact}
The set $\mathcal{H}$ is nonempty, convex, closed, and compact in the topology of locally uniform convergence.
\end{lemma}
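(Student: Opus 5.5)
The plan is to verify the four properties in turn. Each defining condition of $\mathcal{H}$ in \eqref{eq:H} is either pointwise or a uniform bound, so nonemptiness, convexity and closedness are routine. Compactness will then follow from the Arzel\`a--Ascoli theorem combined with a diagonal argument on the exhausting compacts $[0,k]$.

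\emph{Nonemptiness.} Take $h\equiv P$. It is continuous, satisfies $P\le P\le b$, has $P(0)=b$, and is non-increasing. Its Lipschitz constant is exactly $a$, because $x\mapsto b-ax$ has slope $-a$ and the positive part $(\cdot)^+$ is $1$-Lipschitz. One could equally take $h\equiv b$, which trivially satisfies every condition.

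\emph{Convexity.} Let $h_1,h_2\in\mathcal{H}$ and $\lambda\in[0,1]$, and put $h=\lambda h_1+(1-\lambda)h_2$. Then:
\begin{itemize}
\item the bounds $P\le h\le b$ and the value $h(0)=b$ are preserved by convex combination;
\item a convex combination of non-increasing functions is non-increasing;
\item the triangle inequality gives $|h(x)-h(y)|\le \lambda a|x-y|+(1-\lambda)a|x-y|=a|x-y|$.
\end{itemize}

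\emph{Closedness.} Suppose $h_n\in\mathcal{H}$ converges to $h$ locally uniformly. In particular $h_n\to h$ pointwise, and each defining property passes to pointwise limits:
\begin{itemize}
\item the inequalities $P(x)\le h_n(x)\le b$ and the identity $h_n(0)=b$;
\item monotonicity, since $h_n(x)\ge h_n(y)$ for $x\le y$;
\item the Lipschitz bound, since $|h_n(x)-h_n(y)|\le a|x-y|$.
\end{itemize}
The limit $h$ is continuous as a locally uniform limit of continuous functions, so $h\in\mathcal{H}$.

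\emph{Compactness.} Because $d_{\mathrm{loc}}$ metrizes the topology, it suffices to prove sequential compactness. Let $(h_n)\subset\mathcal{H}$. On each $[0,k]$ the family is uniformly bounded by $b$ and equicontinuous with common modulus $a|x-y|$. By Arzel\`a--Ascoli, for $k=1$ we extract a subsequence converging uniformly on $[0,1]$. From it we extract a further subsequence converging uniformly on $[0,2]$, and so on. The diagonal subsequence converges uniformly on every $[0,k]$, hence in $d_{\mathrm{loc}}$, to some continuous $h$. Closedness then gives $h\in\mathcal{H}$.

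The diagonal extraction is the only step requiring more than a line, and even it is standard. The one point I would make explicit is that equicontinuity comes from the uniform Lipschitz constant $a$, which is independent of $h$. The compactness is local, not global: $\mathcal{H}$ is not compact in the sup norm on $[0,\infty)$. This is exactly why the paper works in the Fr\'echet topology and applies Schauder--Tychonoff rather than Schauder.
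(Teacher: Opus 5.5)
Your proof is correct and follows essentially the same route as the paper. Nonemptiness is witnessed by $P$, convexity and closedness are checked condition by condition via pointwise limits, and compactness comes from Arzel\`a--Ascoli on $[0,1],[0,2],\dots$ with a diagonal subsequence, together with metrizability by $d_{\mathrm{loc}}$. Your closing remark that $\mathcal{H}$ is not compact in the sup norm on $[0,\infty)$ is also correct: for example, the functions equal to $b$ on $[0,n]$ and then decreasing with slope $-a$ to $0$ lie in $\mathcal{H}$ but have no uniformly Cauchy subsequence.
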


\begin{proof}
First, $P(x)=(b-ax)^+$ belongs to $\mathcal{H}$: it satisfies $0\le P\le b$, $P(0)=b$, it is non-increasing, and its Lipschitz constant is exactly $a$. Hence $\mathcal{H}$ is nonempty.

To prove convexity, let $h,g\in\mathcal{H}$ and $\lambda\in[0,1]$. Every defining condition is preserved by $\lambda h+(1-\lambda)g$: the pointwise bounds and the value at zero are immediate, a convex combination of non-increasing functions is non-increasing, and
\[
|\lambda h(x)+(1-\lambda)g(x)-\lambda h(y)-(1-\lambda)g(y)|
\le a|x-y|.
\]
Thus $\mathcal{H}$ is convex.

Now suppose $h_n\in\mathcal{H}$ and $h_n\to h$ locally uniformly. Pointwise passage to the limit gives $P\le h\le b$ and $h(0)=b$. For $x\le y$, $h_n(x)\ge h_n(y)$ for every $n$, so $h(x)\ge h(y)$. Similarly,
\[
|h(x)-h(y)|=\lim_{n\to\infty}|h_n(x)-h_n(y)|\le a|x-y|.
\]
Hence $h\in\mathcal{H}$, proving closedness in the compact-open topology.

For compactness, take an arbitrary sequence $(h_n)\subset\mathcal{H}$. On $[0,1]$ the sequence is uniformly bounded by $b$ and equicontinuous because all functions have the common Lipschitz constant $a$. Arzel\`a--Ascoli therefore gives a subsequence converging uniformly on $[0,1]$. From that subsequence extract a further subsequence converging uniformly on $[0,2]$, and continue inductively. The diagonal subsequence converges uniformly on every $[0,K]$, hence locally uniformly on $[0,\infty)$. By the closedness just proved, its limit belongs to $\mathcal{H}$. Therefore every sequence in $\mathcal{H}$ has a convergent subsequence with limit in $\mathcal{H}$. Since the compact-open topology on $C([0,\infty))$ is metrized by $d_{\mathrm{loc}}$, sequential compactness is equivalent to compactness. This proves the claim.
\end{proof}

\begin{lemma}[Continuity of the stopping operator]\label{lem:Tcont}
Let $h_n,h\in\mathcal{H}$ and $h_n\to h$ locally uniformly. Then $\mathcal{T} h_n\to\mathcal{T} h$ locally uniformly.
\end{lemma}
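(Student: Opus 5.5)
We must show that $\mathcal{T} h_n\to\mathcal{T} h$ uniformly on every compact set $[0,K]$. The argument has three steps: truncate the horizon, prove pathwise stability of the diffusions on a finite horizon, and upgrade pointwise convergence to locally uniform convergence.

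\emph{Step 1: truncation of the horizon.} Since $0\le P\le b$, for any candidate $g\in\mathcal{H}$ and any $\tau$,
\[
\bigl|\mathbb{E}[e^{-r\tau}P(X^g_\tau)]-\mathbb{E}[e^{-r(\tau\wedge T)}P(X^g_{\tau\wedge T})]\bigr|\le 2b\,e^{-rT}.
\]
Indeed, on $\{\tau\le T\}$ both payoffs coincide, and on $\{\tau>T\}$ each discounted payoff is at most $b e^{-rT}$. The same bound holds with the convention at $\tau=\infty$. Define the finite-horizon value
\[
(\mathcal{T}_T g)(x)=\sup_{\tau\le T}\mathbb{E}_x\!\left[e^{-r\tau}P(X^g_\tau)\right].
\]
Then $\sup_x|\mathcal{T} g-\mathcal{T}_T g|\le 2be^{-rT}$, uniformly in $g\in\mathcal{H}$ and $x\ge0$. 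It therefore suffices to prove, for each fixed $T$, that $\sup_{x\le K}|\mathcal{T}_T h_n(x)-\mathcal{T}_T h(x)|\to0$.

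\emph{Step 2: stability of the diffusions (the main obstacle).} Drive $X^n=X^{h_n,x}$ and $X=X^{h,x}$ with the same Brownian motion. As in Proposition~\ref{prop:sde}, the drifts $\mu_{h_n}$ share the Lipschitz constant $L=|m-\delta|+1-\delta$, and $|\mu_{h_n}(z)-\mu_h(z)|=\frac{1-\delta}{a}|h_n(z)-h(z)|$. The difficulty is that $h_n\to h$ only locally uniformly while the paths are unbounded, so a localization is needed. For $R>K$, let $\varsigma_R=\inf\{t:X_t\ge R\}$. Standard moment bounds for linear-growth SDEs give
\[
\sup_{x\le K}\mathbb{P}\Bigl(\sup_{t\le T}X_t\ge R\Bigr)\le C(K,T)/R,
\]
uniformly over $\mathcal{H}$. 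Before $\varsigma_R$, the drift of $X^n$ can be written as $\mu_h(X^n)+[\mu_{h_n}-\mu_h](X^n)$. I then apply It\^o's formula to $(X^n-X)^2$, stopped at $\varsigma_R$ and at the analogous exit time of $X^n$ from $[0,2R]$, use Burkholder--Davis--Gundy, and close with Gronwall. This gives
\[
\mathbb{E}\sup_{t\le T\wedge\varsigma}|X^n_t-X_t|^2\le C(T)\,\varepsilon_n(2R)^2,\qquad \varepsilon_n(R)=\sup_{z\le R}|h_n(z)-h(z)|\to0.
\]
The bound is uniform in $x\le K$. Combining it with the exit-probability bound yields
\[
\sup_{x\le K}\mathbb{E}\Bigl[\sup_{t\le T}|X^n_t-X_t|\wedge 1\Bigr]\to0.
\]

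\emph{Step 3: conclusion.} As in the proof of Proposition~\ref{prop:invariance}, since $P$ is bounded by $b$ and $a$-Lipschitz,
\[
|\mathcal{T}_T h_n(x)-\mathcal{T}_T h(x)|\le \sup_{\tau\le T}\mathbb{E}\bigl|P(X^n_\tau)-P(X_\tau)\bigr|\le \max(a,2b)\,\mathbb{E}\Bigl[\sup_{t\le T}|X^n_t-X_t|\wedge1\Bigr].
\]
By Step 2 the right-hand side tends to zero uniformly in $x\le K$. One point needs care: the two filtrations coincide because both processes are strong solutions driven by the same Brownian motion. Hence the same stopping times are admissible for both problems, and the inequality $\sup F-\sup G\le\sup(F-G)$ applies. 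Together with Step 1 (first choose $T$ so that $2be^{-rT}$ is small, then let $n\to\infty$), this proves the locally uniform convergence.

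If the localized Gronwall estimate proves awkward, a fallback is available. Pointwise convergence at each $x$ follows from the same coupling, and the common $a$-Lipschitz bound for $\mathcal{T} h_n$ from Proposition~\ref{prop:invariance} makes the family equicontinuous, so pointwise convergence on $[0,K]$ upgrades to uniform convergence by Arzel\`a--Ascoli.
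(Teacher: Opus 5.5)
Your argument is correct. It shares the paper's skeleton: coupling through the same Brownian motion, localization with BDG/Gronwall plus uniform exit-probability bounds, and the horizon cut $\mathcal{T}_T g\le\mathcal{T}g\le\mathcal{T}_T g+be^{-rT}$. The genuine difference is in the finite-horizon step.

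The paper proves $\mathcal{T}_T h_n\to\mathcal{T}_T h$ in three stages. It restricts to stopping times on a deterministic grid, appeals to stability of the discrete Snell envelope under backward induction, and then refines the mesh. Made rigorous, that needs an inductive convergence argument for the grid value functions and a mesh-refinement error that is uniform in $n$, neither of which the paper spells out.

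You instead use the fact that $\mathcal{T}$ is a supremum over one fixed class $\mathfrak T$ of stopping times on one filtered space. This is the same device the paper uses for the Lipschitz bound in Proposition~\ref{prop:invariance}. Then $|\sup F-\sup G|\le\sup|F-G|$ and $|P(u)-P(v)|\le\max(a,b)\,(|u-v|\wedge1)$ reduce everything to a pathwise estimate, with the explicit modulus
\[
\sup_{0\le x\le K}|\mathcal{T}h_n(x)-\mathcal{T}h(x)|\le 2be^{-rT}+\max(a,2b)\sup_{0\le x\le K}\mathbb{E}\Bigl[\sup_{t\le T}|X^n_t-X_t|\wedge1\Bigr].
\]
This avoids Snell-envelope stability altogether and is shorter and fully elementary. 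What the paper's discretize-and-pass-to-the-limit route could buy in principle is robustness when only convergence in law of the diffusions is available, rather than a strong coupling. That is not needed here.

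Two minor remarks:
\begin{itemize}
\item You do not need the filtrations of $X^n$ and $X$ to coincide. It suffices that both strong solutions are $\{\mathcal F_t\}$-adapted and both values are taken over the same $\mathfrak T$, as in the paper's definition.
\item Your Arzel\`a--Ascoli fallback uses the $a$-Lipschitz bound from Proposition~\ref{prop:invariance}, and hence the assumption $m\le r+\delta$. Your main uniform-in-$x$ estimate does not need that assumption.
\end{itemize}
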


\begin{proof}
Fix a compact set of initial states $[0,K]$. We first compare the state processes on a finite horizon. Since $Q$ is $1/a$-Lipschitz and $h_n\to h$ locally uniformly, the corresponding drifts satisfy
\[
\mu_{h_n}\longrightarrow\mu_h
\quad\text{locally uniformly},
\]
and, by the common Lipschitz bound defining $\mathcal{H}$, all $\mu_{h_n}$ have a Lipschitz constant and a linear-growth bound independent of $n$. Couple $X^{h_n,x}$ and $X^{h,x}$ with the same Brownian motion. For $R>K$, stop both processes when either one exits $[0,R]$. The standard Burkholder--Davis--Gundy and Gronwall estimates then give, for every $T>0$,
\[
\sup_{0\le x\le K}
\mathbb{E}\!\left[
\sup_{0\le t\le T\wedge\eta_R}
|X_t^{h_n,x}-X_t^{h,x}|^2\right]\longrightarrow0,
\]
where $\eta_R$ is the common localization time. Uniform moment bounds implied by the common linear-growth estimate show that
\[
\sup_n\sup_{0\le x\le K}\mathbb{P}(\eta_R\le T)\longrightarrow0
\qquad\text{as }R\to\infty.
\]
Consequently the coupled processes converge uniformly in probability on $[0,T]$, uniformly with respect to $x\in[0,K]$. Appendix~\ref{app:continuity} records these localization estimates explicitly.

Next define the finite-horizon stopping operator
\[
(\mathcal{T}_T h)(x)=\sup_{\tau\le T}
\mathbb{E}_x[e^{-r\tau}P(X_\tau^h)].
\]
Because $P$ is bounded and Lipschitz, the preceding path convergence implies stability of the bounded reward processes $e^{-rt}P(X_t^{h_n,x})$. One way to make the passage to optimal stopping explicit is to restrict stopping times first to a deterministic grid $0=t_0<\cdots<t_M=T$. Backward induction for the discrete Snell envelope involves only conditional expectations and the maximum operation, both stable under the coupled convergence and bounded convergence. Hence the grid-restricted values for $h_n$ converge uniformly on $[0,K]$ to those for $h$. Letting the mesh of the time grid tend to zero, continuity of the sample paths and the Lipschitz property of $P$ give the continuous-time finite-horizon values. Therefore
\[
\sup_{0\le x\le K}|(\mathcal{T}_T h_n)(x)-(\mathcal{T}_T h)(x)|\longrightarrow0.
\]

It remains to control the infinite horizon uniformly. For any stopping time $\tau$, define $\tau_T=\tau$ on $\{\tau\le T\}$ and $\tau_T=T$ on $\{\tau>T\}$. Then $\tau_T\le T$, and because $P\ge0$,
\[
\mathbb{E}[e^{-r\tau}P(X_\tau^h)]
\le \mathbb{E}[e^{-r\tau_T}P(X_{\tau_T}^h)]
+b e^{-rT}.
\]
Taking suprema and using $\mathcal{T}_T h\le\mathcal{T}h$ yields
\begin{equation}\label{eq:tail}
0\le (\mathcal{T}h)(x)-(\mathcal{T}_T h)(x)\le b e^{-rT},
\end{equation}
uniformly in $h\in\mathcal{H}$ and $x\ge0$. Hence
\[
\sup_{0\le x\le K}|\mathcal{T}h_n(x)-\mathcal{T}h(x)|
\le 2be^{-rT}
+\sup_{0\le x\le K}|\mathcal{T}_T h_n(x)-\mathcal{T}_T h(x)|.
\]
Given $\varepsilon>0$, choose $T$ so that $2be^{-rT}<\varepsilon/2$, and then choose $n$ so that the finite-horizon term is below $\varepsilon/2$. This proves uniform convergence on every compact interval, i.e. $\mathcal{T}h_n\to\mathcal{T}h$ locally uniformly.
\end{proof}

We can now state the global existence theorem.

\begin{theorem}[Existence of an ESOE]\label{thm:existence}
Suppose $a,b,r,\sigma>0$, $\delta\in[0,1)$, and $m\le r+\delta$. Then the operator $\mathcal{T}$ defined by \eqref{eq:Th} has at least one fixed point $f\in\mathcal{H}$. Consequently an endogenous storage-option equilibrium exists.
\end{theorem}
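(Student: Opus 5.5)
The existence claim should follow from the Schauder--Tychonoff fixed-point theorem applied in the Fréchet space $C([0,\infty))$ with the topology of locally uniform convergence, metrized by $d_{\mathrm{loc}}$. That theorem states: if $K$ is a nonempty, compact, convex subset of a locally convex Hausdorff topological vector space and $F:K\to K$ is continuous, then $F$ has a fixed point. The proof therefore reduces to checking its three hypotheses with $K=\mathcal{H}$ and $F=\mathcal{T}$, each of which has been prepared by an earlier result.

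\emph{Step 1.} Under the standing assumptions, and in particular $m\le r+\delta$, Proposition~\ref{prop:sde} guarantees that for every $h\in\mathcal{H}$ the demand diffusion is well posed and non-negative. Hence $\mathcal{T}h$ is well defined by \eqref{eq:Th}.

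\emph{Step 2.} Proposition~\ref{prop:invariance} gives $\mathcal{T}(\mathcal{H})\subset\mathcal{H}$. This is the only place where the drift--discount condition enters.

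\emph{Step 3.} Lemma~\ref{lem:compact} shows that $\mathcal{H}$ is nonempty, convex and compact in the topology of $d_{\mathrm{loc}}$. I would recall that $C([0,\infty))$ with this topology is a locally convex Hausdorff space, being a Fréchet space given by the seminorms $\sup_{[0,k]}|\cdot|$.

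\emph{Step 4.} Lemma~\ref{lem:Tcont} gives sequential continuity of $\mathcal{T}$ on $\mathcal{H}$. Since the topology is metrizable, sequential continuity is equivalent to continuity.

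Applying Schauder--Tychonoff then yields $f\in\mathcal{H}$ with $\mathcal{T}f=f$, which is an ESOE by Definition~\ref{def:esoe}.

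The only point requiring care is the continuity step. Lemma~\ref{lem:Tcont} is stated for sequences, so I would make the metrizability remark explicit so that the topological form of Schauder--Tychonoff applies. Should a reader prefer a Banach-space version, an alternative is to note that $\mathcal{H}$ is a compact convex subset of a metrizable locally convex space, so Tychonoff's theorem (or the Schauder theorem for compact convex sets in such spaces) applies directly. I would also remark that no uniqueness is asserted: the fixed point is obtained purely from compactness, consistent with the paper's separation of existence from the contraction-based uniqueness result.
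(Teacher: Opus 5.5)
Your proposal is correct and follows the paper's proof: both apply Schauder--Tychonoff in $C([0,\infty))$ with the locally uniform topology, taking compactness and convexity from Lemma~\ref{lem:compact}, the self-map property from Proposition~\ref{prop:invariance}, and continuity from Lemma~\ref{lem:Tcont}. Your explicit remark that sequential continuity suffices by metrizability is a small addition the paper leaves implicit; one correction is that Proposition~\ref{prop:sde} does not use $m\le r+\delta$, which your own Step~2 correctly identifies as entering only through invariance.
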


\begin{proof}
The proof is an application of the Schauder--Tychonoff fixed-point theorem, so we verify its hypotheses explicitly. The ambient space $C([0,\infty))$, endowed with the topology of locally uniform convergence, is a Hausdorff locally convex topological vector space. By Lemma~\ref{lem:compact}, $\mathcal{H}$ is nonempty, convex, and compact in this topology. Proposition~\ref{prop:invariance} shows that $\mathcal{T}$ is a self-map of this set,
\[
\mathcal{T}(\mathcal{H})\subseteq\mathcal{H},
\]
and Lemma~\ref{lem:Tcont} shows that this self-map is continuous. Schauder--Tychonoff therefore guarantees at least one $f\in\mathcal{H}$ satisfying
\[
\mathcal{T}f=f.
\]
By Definition~\ref{def:esoe}, such an $f$ is an endogenous storage-option equilibrium. No contraction property is used in this argument, so the conclusion is existence only; uniqueness is not implied by the fixed-point theorem.
\end{proof}

\begin{remark}[What the existence theorem does not prove]\label{rem:unique}
Theorem~\ref{thm:existence} proves existence, not uniqueness. This distinction is important. A Schauder-type argument cannot by itself rule out several fixed points. Uniqueness requires an additional property such as contraction, monotonicity combined with a suitable order argument, or another comparison principle. In the present paper we keep the two issues separate.
\end{remark}

\section{Free-boundary representation, selling threshold, and uniqueness}\label{sec:freeboundary}

This section translates the stopping formulation into a variational inequality and clarifies the geometry of the optimal selling rule. The analysis first treats a fixed candidate price, then identifies the single-crossing structure behind the numerical selling boundary, and finally separates the additional contraction assumption required for uniqueness from the existence argument of the preceding section.

\subsection{Obstacle problem for a fixed candidate price}

Fix $h\in\mathcal{H}$ and let $V_h=\mathcal{T} h$. The infinitesimal generator of $X^h$ is
\[
\mathcal A_h\phi(x)=\mu_h(x)\phi'(x)+\frac12\sigma^2x^2\phi''(x).
\]
The perpetual optimal-stopping value is characterized by the variational inequality
\begin{equation}\label{eq:VIh}
\min\left\{V_h-P,\; rV_h-\mathcal A_hV_h\right\}=0,
\qquad x>0,
\end{equation}
with $V_h(0)=b$ and the natural boundedness condition at infinity. In the continuation set $\mathcal{C}_h=\{V_h>P\}$,
\begin{equation}\label{eq:contODE}
\frac12\sigma^2x^2V_h''(x)+\mu_h(x)V_h'(x)-rV_h(x)=0.
\end{equation}
At an ESOE, $h=f=V_f$, so \eqref{eq:VIh} becomes the nonlinear obstacle problem
\begin{equation}\label{eq:nonlinearVI}
\min\left\{f-P,\; rf-\left[(m-\delta)x-(1-\delta)Q(f(x))\right]f'
-\frac12\sigma^2x^2f''\right\}=0.
\end{equation}
The nonlinearity is not in the diffusion term. It appears because the equilibrium price changes the drift of the state process through $Q(f)$.

For the linear inverse demand, the immediate-sale region is naturally located at low values of $x$, where the fundamental price is high. At larger quantities the current fundamental price is low and storage becomes valuable. The next proposition records a single-crossing property that supports the one-sided selling geometry used in the numerical section.

\begin{proposition}[Single-crossing property of the stopping payoff]\label{prop:threshold}
Let $h\in\mathcal{H}$ and assume $m\le r+\delta$. On the linear part of the fundamental demand curve, $0<x<b/a$, the function
\[
x\longmapsto (\mathcal A_h-r)P(x)
\]
is non-decreasing. Consequently its sign can change at most once, from non-positive to non-negative values (with the possibility of a zero interval), and the region in which an infinitesimal delay is locally preferable to immediate sale has a one-sided, single-crossing structure.
\end{proposition}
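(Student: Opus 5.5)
The proof is a direct computation. On the open interval $0<x<b/a$ the payoff is smooth and affine, $P(x)=b-ax$, with $P'=-a$ and $P''=0$. The second-order term of $\mathcal A_h$ therefore vanishes there, and
\[
(\mathcal A_h-r)P(x)=-a\,\mu_h(x)-r(b-ax).
\]
I would substitute the drift \eqref{eq:drift} and use the simplification already used in Proposition~\ref{prop:sde}. Since $h\in\mathcal H$ takes values in $[0,b]$, the generalized inverse is $Q(h(x))=(b-h(x))/a$, so $aQ(h(x))=b-h(x)$. Collecting terms gives
\[
(\mathcal A_h-r)P(x)=(r+\delta-m)\,a\,x+(1-\delta)\bigl(b-h(x)\bigr)-rb .
\]

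Monotonicity then follows term by term. The first term is linear in $x$ with slope $a(r+\delta-m)\ge0$, by Assumption~\ref{ass:driftdisc} and $a>0$. In the second term, $h$ is non-increasing by the definition \eqref{eq:H} of $\mathcal H$, so $b-h(x)$ is non-decreasing, and its coefficient $1-\delta$ is positive because $\delta\in[0,1)$. The last term is constant. Hence $x\mapsto(\mathcal A_h-r)P(x)$ is non-decreasing on $(0,b/a)$. It is also continuous there, since $h$ is $a$-Lipschitz.

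For the consequence, a non-decreasing continuous function $g$ on an interval has the property that $\{g<0\}$ is an initial subinterval, $\{g>0\}$ is a terminal subinterval, and $\{g=0\}$ is a (possibly degenerate or empty) interval between them. So the sign changes at most once, from non-positive to non-negative. I would then interpret the sets through Dynkin's formula. Where $(\mathcal A_h-r)P>0$, the payoff $e^{-rt}P(X_t)$ has a positive local drift, so a short delay improves on immediate sale. Where $(\mathcal A_h-r)P<0$, immediate sale is locally preferable. This yields the one-sided structure: the locally-prefer-to-wait region on the linear part of the demand curve is of the form $(x^\ast,b/a)$ or $[x^\ast,b/a)$.

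There is no real analytic obstacle. The only points needing care are confirming that $Q(h(x))=(b-h(x))/a$ is valid, which holds because $h$ takes values in $[0,b]$, and stating clearly that the claim concerns the local sign structure rather than the global stopping region. The latter would require an additional argument beyond this proposition.
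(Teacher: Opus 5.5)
Your proposal is correct and follows essentially the same route as the paper: on $0<x<b/a$ you reduce to $(\mathcal A_h-r)P=-a\mu_h-r(b-ax)$, use $aQ(h(x))=b-h(x)$ to get $(1-\delta)(b-h(x))-rb+a(r+\delta-m)x$, and read off monotonicity from $h$ being non-increasing and $m\le r+\delta$. The paper only phrases the monotonicity as a difference at $x_1<x_2$ rather than term by term, and, like you, it restricts the conclusion to the local sign structure and does not claim connectedness of the global stopping set.
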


\begin{proof}
On $0<x<b/a$ the fundamental price is $P(x)=b-ax$, so $P'(x)=-a$ and $P''(x)=0$. Therefore
\[
(\mathcal A_h-r)P(x)=-a\mu_h(x)-r(b-ax).
\]
Substituting \eqref{eq:drift} and using $Q(h(x))=(b-h(x))/a$ gives
\begin{align}
(\mathcal A_h-r)P(x)
&=-a\left[(m-\delta)x-(1-\delta)\frac{b-h(x)}{a}\right]-r(b-ax)\\
&=(1-\delta)(b-h(x))-rb+a(r+\delta-m)x.\label{eq:AP}
\end{align}
Now let $0<x_1<x_2<b/a$. Subtracting the value of \eqref{eq:AP} at $x_1$ from that at $x_2$ yields
\[
\begin{aligned}
&[(\mathcal A_h-r)P](x_2)-[(\mathcal A_h-r)P](x_1)\\
&\qquad=(1-\delta)\bigl(h(x_1)-h(x_2)\bigr)
+a(r+\delta-m)(x_2-x_1).
\end{aligned}
\]
Both terms on the right are non-negative: the first because $h$ is non-increasing, and the second because $m\le r+\delta$. Thus $x\mapsto(\mathcal A_h-r)P(x)$ is non-decreasing on the linear part of $P$.

A non-decreasing function may vanish on an interval, but its sign can change at most once, and any sign change must be from negative to positive as $x$ increases. In the stopping interpretation, $(\mathcal A_h-r)P>0$ means that the discounted expected payoff has positive first-order drift under an infinitesimal continuation, whereas $(\mathcal A_h-r)P<0$ favors immediate exercise at the local level. Hence the local preference between selling and delaying can switch at most once as $x$ increases. This is the asserted single-crossing property; it does not by itself impose global connectedness of the stopping set, which is why the subsequent remark states that issue separately.
\end{proof}

\begin{remark}[One-sided selling boundary]\label{rem:threshold}
Proposition~\ref{prop:threshold} is deliberately stated as a single-crossing result rather than as an unconditional theorem that the global stopping set must be connected. A one-sided stopping boundary follows under the standard regularity and one-sidedness conditions for the corresponding one-dimensional perpetual stopping problem. This is the configuration obtained in all numerical experiments below. In that case we write
\[
\mathcal{D}_h=[0,x_h^*],\qquad \mathcal{C}_h=(x_h^*,\infty),
\]
and continuous fit gives $V_h(x_h^*)=P(x_h^*)$. Whenever the boundary is regular for the diffusion and smooth fit applies,
\begin{equation}\label{eq:smoothfit}
V_h'(x_h^*)=-a.
\end{equation}
This formulation avoids building a global connectedness claim into the existence theorem; the existence of an ESOE itself does not require a prior free-boundary geometry.
\end{remark}

For an equilibrium with the one-sided stopping geometry of Remark~\ref{rem:threshold}, we write $x^*=x_f^*$ and
\begin{equation}\label{eq:pstar}
p^*=f(x^*)=P(x^*)=b-ax^*.
\end{equation}
The optimal selling time is then
\begin{equation}\label{eq:taustar}
\tau^*=\inf\{t\ge0:X_t^f\le x^*\}.
\end{equation}
Because price is decreasing in the quantity state, this is equivalently the first time the equilibrium price reaches the selling level $p^*$ from below. The threshold $p^*$ is a threshold for the \emph{undiscounted} equilibrium price $p_t=f(X_t^f)$. If one writes the discounted process $\widetilde p_t=e^{-rt}p_t$, the corresponding discounted boundary is $e^{-rt}p^*$ and is therefore time dependent.

\subsection{Uniqueness under contraction}

The global existence theorem does not require the operator to be a contraction. When a contraction estimate is available, however, uniqueness and convergence of the outer iteration follow immediately. A sufficient contraction condition for the continuous-time storage operator was derived in the shorter study \citet{karimi2024}. Rather than repeat that proof here, we formulate the numerical consequences in a way that makes the logical dependence explicit.

\begin{assumption}[Contraction regime]\label{ass:contraction}
There is a complete metric $d_w$ on a nonempty $\mathcal{T}$-invariant subset $\mathcal{H}_w\subset\mathcal{H}$ and a constant $q\in(0,1)$ such that
\begin{equation}\label{eq:contract}
d_w(\mathcal{T} h,\mathcal{T} g)\le q\,d_w(h,g),\qquad h,g\in\mathcal{H}_w.
\end{equation}
\end{assumption}

\begin{corollary}[Uniqueness and iterative convergence]\label{cor:unique}
Under Assumption~\ref{ass:contraction}, the ESOE is unique in $\mathcal{H}_w$. Moreover, for $h^{(n+1)}=\mathcal{T} h^{(n)}$,
\begin{equation}\label{eq:qlinear}
d_w(h^{(n)},f)\le q^n d_w(h^{(0)},f).
\end{equation}
\end{corollary}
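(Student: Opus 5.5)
This is a direct application of the Banach fixed-point theorem on the complete metric space $(\mathcal{H}_w,d_w)$. The only point needing care is how an ESOE in $\mathcal{H}_w$ relates to the fixed point Banach produces.

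\emph{Uniqueness.} Suppose $f,g\in\mathcal{H}_w$ are both ESOEs, so $\mathcal{T}f=f$ and $\mathcal{T}g=g$ by Definition~\ref{def:esoe}. Applying \eqref{eq:contract} gives
\[
d_w(f,g)=d_w(\mathcal{T}f,\mathcal{T}g)\le q\,d_w(f,g).
\]
Since $q<1$, this forces $d_w(f,g)=0$, hence $f=g$. Completeness is not needed for this step.

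\emph{Existence in $\mathcal{H}_w$.} Theorem~\ref{thm:existence} gives a fixed point in $\mathcal{H}$, but it need not lie in $\mathcal{H}_w$. So I would obtain existence inside $\mathcal{H}_w$ from the standard Banach argument instead.
\begin{itemize}
\item Start from any $h^{(0)}\in\mathcal{H}_w$. By $\mathcal{T}$-invariance, all iterates $h^{(n)}$ stay in $\mathcal{H}_w$.
\item Induction gives $d_w(h^{(n+1)},h^{(n)})\le q^n d_w(h^{(1)},h^{(0)})$.
\item Summing the geometric series shows $(h^{(n)})$ is Cauchy, so by completeness $h^{(n)}\to f\in\mathcal{H}_w$.
\item The contraction makes $\mathcal{T}$ $d_w$-Lipschitz, hence $d_w$-continuous on $\mathcal{H}_w$. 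Passing to the limit in $h^{(n+1)}=\mathcal{T}h^{(n)}$ gives $\mathcal{T}f=f$.
\end{itemize}
So $f$ is the unique ESOE in $\mathcal{H}_w$.

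\emph{The rate \eqref{eq:qlinear}.} Use $f=\mathcal{T}f$ and \eqref{eq:contract} once per step:
\[
d_w(h^{(n)},f)=d_w(\mathcal{T}h^{(n-1)},\mathcal{T}f)\le q\,d_w(h^{(n-1)},f).
\]
Iterating $n$ times yields $d_w(h^{(n)},f)\le q^n d_w(h^{(0)},f)$.

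\emph{Main subtlety.} The only delicate point is the interpretation of ``unique in $\mathcal{H}_w$''. The corollary asserts uniqueness relative to $\mathcal{H}_w$, not to all of $\mathcal{H}$, so other ESOEs in $\mathcal{H}\setminus\mathcal{H}_w$ from Theorem~\ref{thm:existence} are not excluded. I would state this explicitly, consistent with Remark~\ref{rem:unique}. I would also note that the estimate requires $h^{(0)}\in\mathcal{H}_w$.
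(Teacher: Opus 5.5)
Your proposal is correct and follows the same route as the paper. Uniqueness comes from the direct estimate $d_w(f,g)=d_w(\mathcal{T}f,\mathcal{T}g)\le q\,d_w(f,g)$, the rate from one-step recursion against $f=\mathcal{T}f$, and existence in $\mathcal{H}_w$ from Banach's theorem, which you write out (Cauchy sequence, completeness, continuity) where the paper simply cites it. Your remarks that $h^{(0)}$ must lie in $\mathcal{H}_w$ and that uniqueness holds only within $\mathcal{H}_w$ are accurate and consistent with the paper's Remark~\ref{rem:unique}.
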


\begin{proof}
Because $(\mathcal{H}_w,d_w)$ is complete and $\mathcal{T}$ maps $\mathcal{H}_w$ into itself with contraction coefficient $q<1$, Banach's fixed-point theorem already guarantees a unique fixed point in $\mathcal{H}_w$. The uniqueness can also be seen directly: if $f$ and $g$ are two fixed points, then
\[
d_w(f,g)=d_w(\mathcal{T}f,\mathcal{T}g)\le q\,d_w(f,g).
\]
Since $1-q>0$, this implies $d_w(f,g)=0$, and therefore $f=g$.

For the iterates, the fixed-point identity $\mathcal{T}f=f$ and the contraction inequality give
\[
d_w(h^{(n+1)},f)
=d_w(\mathcal{T}h^{(n)},\mathcal{T}f)
\le q\,d_w(h^{(n)},f).
\]
Applying this estimate recursively yields
\[
d_w(h^{(n)},f)\le q^n d_w(h^{(0)},f),
\]
which is \eqref{eq:qlinear}.
\end{proof}

\begin{remark}[Meaning of ``order one'']
Equation \eqref{eq:qlinear} is geometric, or $q$-linear, convergence of the fixed-point iteration. It is different from the order of a time- or space-discretization. In particular, contraction of the exact operator does not by itself imply that a discrete approximation has an $O(\Delta t)$ or $O(\Delta x)$ consistency error. Section~\ref{sec:numerics} keeps these two notions separate.
\end{remark}

\section{Numerical approximation}\label{sec:numerics}

This section develops a numerical approximation that mirrors the analytical fixed-point construction as closely as possible. The main ingredients are a stationary obstacle solve for each outer iterate, an asymptotically consistent far-field condition, a monotone finite-difference discretization, and an explicit separation between iteration error and spatial discretization error.

\subsection{Why a stationary obstacle solver is preferable here}

A natural numerical version of the Bellman relation uses a short time step $\Delta t$ and updates
\[
V_{n+1}(x)=\max\left\{P(x),e^{-r\Delta t}\mathbb{E}[V_n(X_{t+\Delta t})\mid X_t=x]\right\}.
\]
This is useful and was used in the earlier manuscript. For the present infinite-horizon problem, however, it has two disadvantages. First, the contraction factor of a time-stepping Bellman update is close to $e^{-r\Delta t}$, so a small $\Delta t$ may require many value iterations. Second, the equilibrium iteration and the time approximation become entangled.

We instead approximate the operator $h\mapsto\mathcal{T} h$ directly. For a fixed outer iterate $h^{(k)}$, we solve the stationary obstacle problem \eqref{eq:VIh} and call the result $V^{(k+1)}$. The next outer iterate is $h^{(k+1)}=V^{(k+1)}$. In this way each outer step approximates the infinite-horizon stopping operator itself.

\subsection{State truncation and far-field condition}

Let $0=x_0<x_1<\cdots<x_N=X_{\max}$ with uniform spacing $\Delta x$. At the left boundary, $V(0)=b$. A zero Dirichlet condition at a moderate $X_{\max}$ is generally inappropriate: a large current quantity does not make the option worthless, because the state can drift down and eventually enter the selling region.

For large $x$, $h(x)$ is bounded and $Q(h(x))$ is bounded, whereas the leading drift term is $(m-\delta)x$. The continuation equation is therefore asymptotically
\[
\frac12\sigma^2x^2V''+(m-\delta)xV'-rV\approx0.
\]
Seeking $V(x)\sim Cx^\xi$ gives
\begin{equation}\label{eq:characteristic}
\frac12\sigma^2\xi(\xi-1)+(m-\delta)\xi-r=0.
\end{equation}
Let $\xi_-<0$ be the negative root,
\begin{equation}\label{eq:ximinus}
\xi_-=
\frac{-(m-\delta-\sigma^2/2)-
\sqrt{(m-\delta-\sigma^2/2)^2+2r\sigma^2}}{\sigma^2}.
\end{equation}
We impose the asymptotic Robin condition
\begin{equation}\label{eq:robin}
X_{\max}V'(X_{\max})=\xi_-V(X_{\max}).
\end{equation}
This condition preserves the slow power-law decay of the perpetual waiting value and is much less sensitive to the chosen truncation point than $V(X_{\max})=0$.

\subsection{Monotone finite differences and policy iteration}

For an outer iterate $h$, write $\mu_i=\mu_h(x_i)$ and $d_i=\frac12\sigma^2x_i^2/(\Delta x)^2$. We discretize the diffusion centrally. The first derivative is chosen so that the discrete generator has nonnegative off-diagonal coefficients:
\[
(\mathcal A_h^{\Delta x}V)_i=
\begin{cases}
 d_i(V_{i-1}-2V_i+V_{i+1})+
 \mu_i\dfrac{V_{i+1}-V_i}{\Delta x}, & \mu_i\ge0,\\[2ex]
 d_i(V_{i-1}-2V_i+V_{i+1})+
 \mu_i\dfrac{V_i-V_{i-1}}{\Delta x}, & \mu_i<0.
\end{cases}
\]
The discrete obstacle problem is
\begin{equation}\label{eq:discLCP}
\min\left\{V_i-P_i,\; rV_i-(\mathcal A_h^{\Delta x}V)_i\right\}=0,
\qquad i=1,\ldots,N-1,
\end{equation}
combined with $V_0=b$ and the first-order discretization of \eqref{eq:robin}. The matrix associated with $rI-\mathcal A_h^{\Delta x}$ has positive diagonal entries and nonpositive off-diagonal entries. The resulting linear complementarity problem is therefore well suited to policy iteration.

At each policy step, nodes are divided into a stopping set, where $V_i=P_i$, and a continuation set, where $(rI-\mathcal A_h^{\Delta x})V=0$. The corresponding tridiagonal system is solved and the active set is updated until it no longer changes. In the reported experiments the inner active-set solve becomes very short after the first few outer iterations.

The complete outer algorithm is as follows.

\begin{enumerate}[label=\textbf{Step \arabic*:},leftmargin=2.4cm]
\item Set $h^{(0)}=P$.
\item For the current $h^{(k)}$, compute $Q(h^{(k)})$ and the drift $\mu_{h^{(k)}}$ on the grid.
\item Solve the discrete obstacle problem \eqref{eq:discLCP} with policy iteration and the boundary conditions $V_0=b$ and \eqref{eq:robin}.
\item Set $h^{(k+1)}=V$ and compute the fixed-point residual $R_k=\left\lVert h^{(k+1)}-h^{(k)}\right\rVert_\infty$.
\item Stop when $R_k$ is below tolerance; otherwise return to Step 2.
\end{enumerate}

\subsection{Error decomposition}

The main advantage of the preceding construction is that the two convergence questions can be stated separately. Let $\mathcal{T}_{\Delta x}$ denote the discrete obstacle operator and let $f_{\Delta x}$ be its fixed point. Assume that the exact operator is a contraction with coefficient $q<1$ in a norm $\left\lVert \cdot\right\rVert$ and that, on the relevant admissible set,
\begin{equation}\label{eq:opconsistency}
\left\lVert \mathcal{T}_{\Delta x}h-\mathcal{T} h\right\rVert\le C\Delta x.
\end{equation}
The first-order consistency is natural for the monotone upwind discretization and also reflects the loss of smoothness at a free boundary.

\begin{proposition}[Fixed-point/discretization error split]\label{prop:error}
Suppose \eqref{eq:contract} and \eqref{eq:opconsistency} hold in the same norm, and let $f$ and $f_{\Delta x}$ be fixed points of $\mathcal{T}$ and $\mathcal{T}_{\Delta x}$, respectively. Then
\begin{equation}\label{eq:fixederror}
\left\lVert f-f_{\Delta x}\right\rVert\le \frac{C}{1-q}\Delta x.
\end{equation}
If the discrete outer iteration is itself contractive with coefficient $q_{\Delta x}<1$, then after $k$ outer iterations,
\begin{equation}\label{eq:totalerror}
\left\lVert f-h_{\Delta x}^{(k)}\right\rVert
\le \frac{C}{1-q}\Delta x
+q_{\Delta x}^{k}\left\lVert f_{\Delta x}-h_{\Delta x}^{(0)}\right\rVert.
\end{equation}
\end{proposition}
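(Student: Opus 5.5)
The plan is the standard perturbation argument for fixed points of a contraction, run in the single norm $\lVert\cdot\rVert$ in which both \eqref{eq:contract} and \eqref{eq:opconsistency} hold. For the first estimate \eqref{eq:fixederror}, I would write $f-f_{\Delta x}=\mathcal{T}f-\mathcal{T}_{\Delta x}f_{\Delta x}$, using that each function is a fixed point of its own operator. Then I would insert the intermediate term $\mathcal{T}f_{\Delta x}$:
\[
\lVert f-f_{\Delta x}\rVert\le \lVert \mathcal{T}f-\mathcal{T}f_{\Delta x}\rVert+\lVert \mathcal{T}f_{\Delta x}-\mathcal{T}_{\Delta x}f_{\Delta x}\rVert\le q\lVert f-f_{\Delta x}\rVert+C\Delta x .
\]
Since $q<1$ and the left-hand side is finite, rearranging gives $(1-q)\lVert f-f_{\Delta x}\rVert\le C\Delta x$, which is \eqref{eq:fixederror}.

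The second estimate \eqref{eq:totalerror} then follows from the triangle inequality $\lVert f-h^{(k)}_{\Delta x}\rVert\le\lVert f-f_{\Delta x}\rVert+\lVert f_{\Delta x}-h^{(k)}_{\Delta x}\rVert$. The first term is bounded by \eqref{eq:fixederror}. For the second term, I would repeat the argument of Corollary~\ref{cor:unique} for the discrete map: $h^{(k+1)}_{\Delta x}=\mathcal{T}_{\Delta x}h^{(k)}_{\Delta x}$ and $f_{\Delta x}=\mathcal{T}_{\Delta x}f_{\Delta x}$. Applying the discrete contraction and iterating gives $\lVert f_{\Delta x}-h^{(k)}_{\Delta x}\rVert\le q_{\Delta x}^k\lVert f_{\Delta x}-h^{(0)}_{\Delta x}\rVert$.

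The only delicate point is checking that the hypotheses are used on a common set. The contraction \eqref{eq:contract} is assumed only on $\mathcal{H}_w$, so I would need $f_{\Delta x}$ (and $f$) to lie in the set where both the contraction and the consistency bound \eqref{eq:opconsistency} hold. I would handle this by reading ``on the relevant admissible set'' in the statement as that set, which is presumably invariant under both operators. I would also need $\lVert f-f_{\Delta x}\rVert<\infty$, which holds here because everything is bounded by $b$. If the contraction is only available on grid functions, I would compare through the same intermediate term using interpolation. I expect nothing beyond this bookkeeping to be an obstacle.
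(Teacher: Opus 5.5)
Your proposal is correct and follows the paper's proof essentially step for step. You insert $\mathcal{T}f_{\Delta x}$, bound by $q\lVert f-f_{\Delta x}\rVert+C\Delta x$ and rearrange, then apply the triangle inequality through $f_{\Delta x}$ together with the iterated discrete contraction. Your remarks on the finiteness of $\lVert f-f_{\Delta x}\rVert$, and on $f_{\Delta x}$ lying in the set where both hypotheses hold, are bookkeeping that the paper leaves implicit; they do not change the argument.
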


\begin{proof}
Because $f=\mathcal{T}f$ and $f_{\Delta x}=\mathcal{T}_{\Delta x}f_{\Delta x}$,
\[
\begin{aligned}
\left\lVert f-f_{\Delta x}\right\rVert
&=\left\lVert \mathcal{T}f-\mathcal{T}_{\Delta x}f_{\Delta x}\right\rVert\\
&\le \left\lVert \mathcal{T}f-\mathcal{T}f_{\Delta x}\right\rVert
+\left\lVert \mathcal{T}f_{\Delta x}-\mathcal{T}_{\Delta x}f_{\Delta x}\right\rVert.
\end{aligned}
\]
The contraction property of the exact operator bounds the first term by $q\left\lVert f-f_{\Delta x}\right\rVert$, while the operator-consistency assumption \eqref{eq:opconsistency}, applied at $h=f_{\Delta x}$, bounds the second by $C\Delta x$. Hence
\[
(1-q)\left\lVert f-f_{\Delta x}\right\rVert\le C\Delta x.
\]
Since $q<1$, division by $1-q$ gives \eqref{eq:fixederror}.

For the finite number of discrete outer iterations, insert the discrete fixed point $f_{\Delta x}$ and use the triangle inequality:
\[
\left\lVert f-h_{\Delta x}^{(k)}\right\rVert
\le \left\lVert f-f_{\Delta x}\right\rVert
+\left\lVert f_{\Delta x}-h_{\Delta x}^{(k)}\right\rVert.
\]
The first term is bounded by \eqref{eq:fixederror}. If the discrete map is contractive with coefficient $q_{\Delta x}$, then repeated application of its contraction estimate gives
\[
\left\lVert f_{\Delta x}-h_{\Delta x}^{(k)}\right\rVert
\le q_{\Delta x}^{k}
\left\lVert f_{\Delta x}-h_{\Delta x}^{(0)}\right\rVert.
\]
Combining the last two displays proves \eqref{eq:totalerror}. The two terms have different origins: the first is the spatial/operator approximation error and the second is the error caused by terminating the outer fixed-point iteration after $k$ steps.
\end{proof}

This proposition is simple, but it prevents an important interpretational mistake. The geometric decay of the outer residual and the first-order grid convergence are different facts and are measured separately below.

\section{Numerical results and statistical diagnostics}\label{sec:results}

This section evaluates the equilibrium computation from both numerical and economic perspectives. We begin with a transparent baseline specification, then examine grid and boundary sensitivity, comparative statics, demand scaling, Monte Carlo validation, and the distributional behavior of the optimal selling time.

\subsection{Baseline configuration}

The baseline parameters are chosen inside the range used in the original numerical study and are listed in Table~\ref{tab:base}. The values $a=0.05$ and $b=0.2$ give the fundamental quantity scale $b/a=4$. We normalize quantity by
\begin{equation}\label{eq:y}
y=\frac{ax}{b}=\frac{x}{b/a}.
\end{equation}
Thus $y=1$ is the point at which the fundamental price $P$ reaches zero. Unless otherwise stated, the computational domain is $0\le y\le4$, the grid contains 1601 points for the baseline solution, and the outer tolerance is $10^{-10}$.

\begin{table}[H]
\centering
\caption{Baseline economic and numerical parameters.}\label{tab:base}
\begin{tabular}{lll}
\toprule
Symbol & Value & Interpretation\\
\midrule
$a$ & 0.05 & slope of inverse demand\\
$b$ & 0.20 & maximal fundamental price\\
$r$ & 0.05 & discount rate\\
$\delta$ & 0.15 & depreciation rate\\
$m$ & 0 & deterministic shock component\\
$\sigma$ & 0.25 & state volatility\\
$X_{\max}$ & $4(b/a)=16$ & upper state truncation\\
$N+1$ & 1601 & grid points in baseline solution\\
outer tolerance & $10^{-10}$ & fixed-point stopping criterion\\
\bottomrule
\end{tabular}
\end{table}

Figure~\ref{fig:baseline} displays the fundamental price and the ESOE price. The selling boundary is $x^*\approx0.18$, or $y^*\approx0.045$, with $p^*\approx0.191$. The continuation region therefore begins while the current fundamental price is still relatively high. This is economically consistent with the strong expected decline of the quantity state under the baseline parameters: waiting can be valuable well before the fundamental price reaches zero. The difference $f-P$ is the storage-option premium. Once $y>1$, the fundamental price is zero, but the equilibrium value remains positive because the owner can wait for the state to return to the selling region.

\begin{figure}[H]
\centering
\includegraphics[width=.86\textwidth]{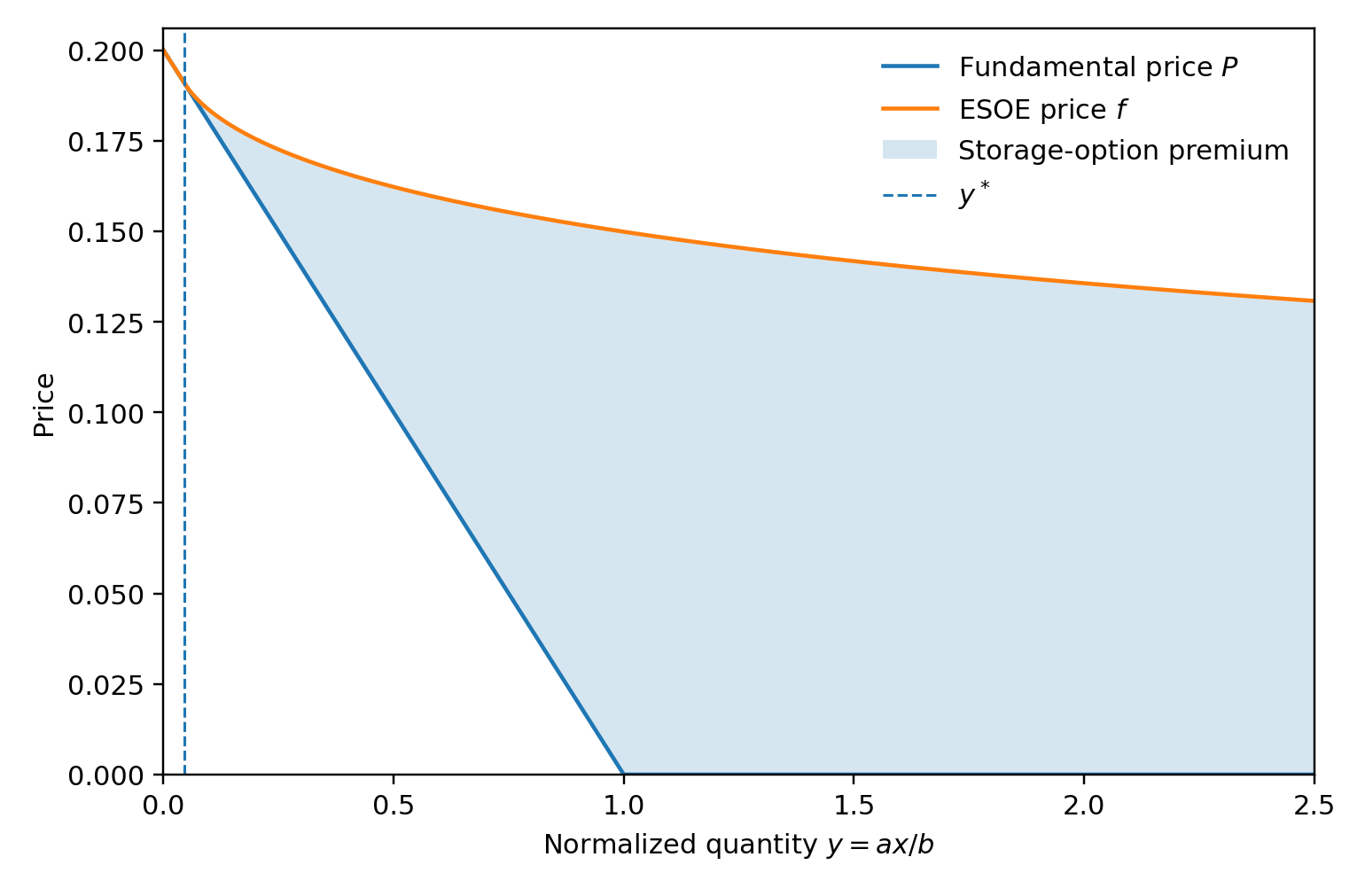}
\caption{Fundamental price, ESOE price, and storage-option premium for the baseline parameters. Quantity is normalized by $b/a$.}\label{fig:baseline}
\end{figure}

The outer fixed-point iteration is shown in Figure~\ref{fig:iterations}. Starting from $P$, the first update already generates a substantial continuation value at larger quantities. Subsequent updates change the state drift through $Q(h)$ and move the curve toward equilibrium. The iterations preserve the economically required shape: the price remains between $P$ and $b$ and is non-increasing.

\begin{figure}[H]
\centering
\includegraphics[width=.84\textwidth]{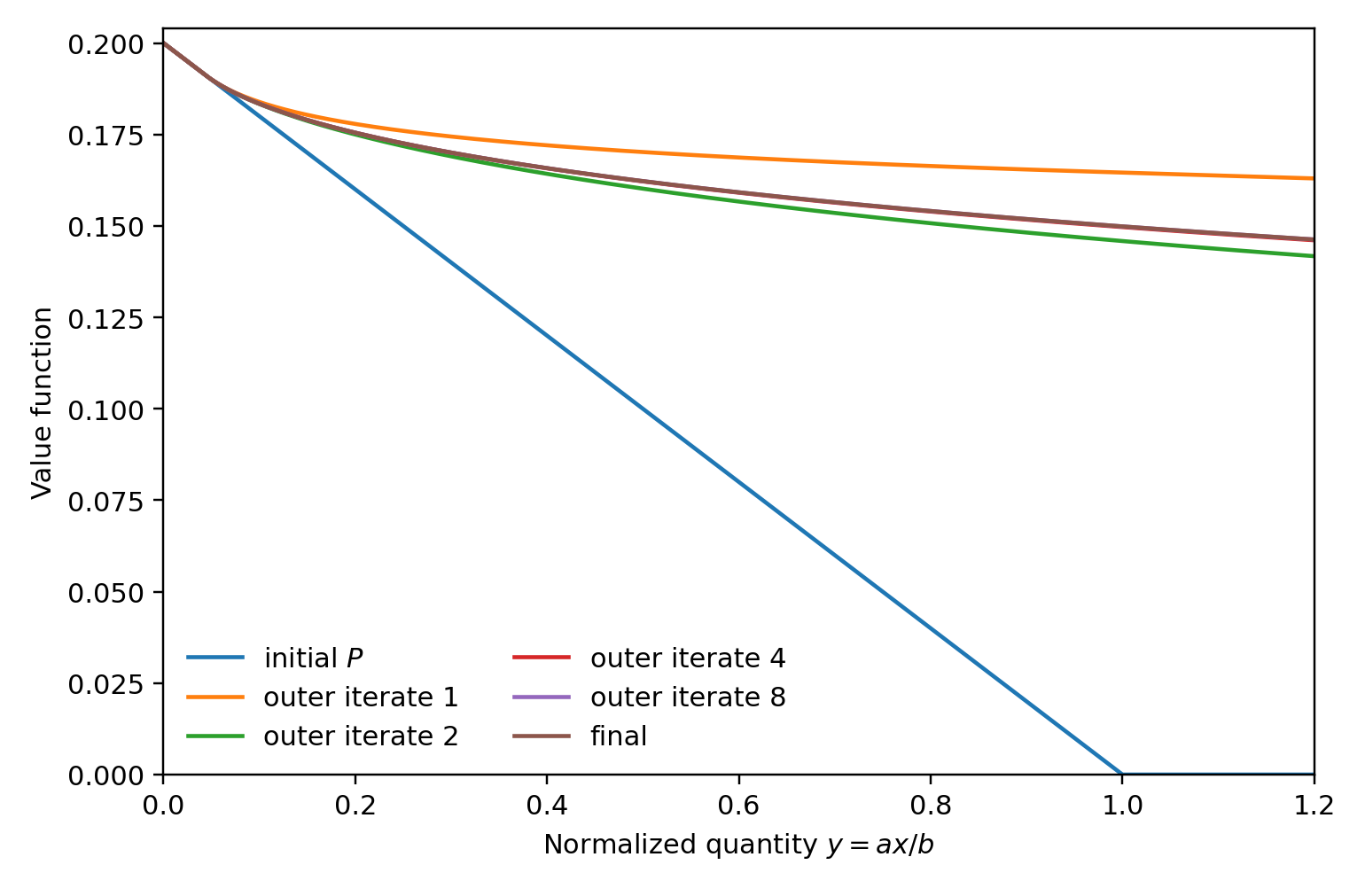}
\caption{Selected outer fixed-point iterates. The initial function is the fundamental price $P$.}\label{fig:iterations}
\end{figure}

For the baseline computation the fixed-point residual falls below $10^{-10}$ in a small number of outer updates. Figure~\ref{fig:residual} reports both the residual and the empirical ratio $R_k/R_{k-1}$. The latter should be interpreted as a numerical diagnostic of geometric convergence, not as a substitute for a theoretical contraction constant.

\begin{figure}[H]
\centering
\includegraphics[width=.84\textwidth]{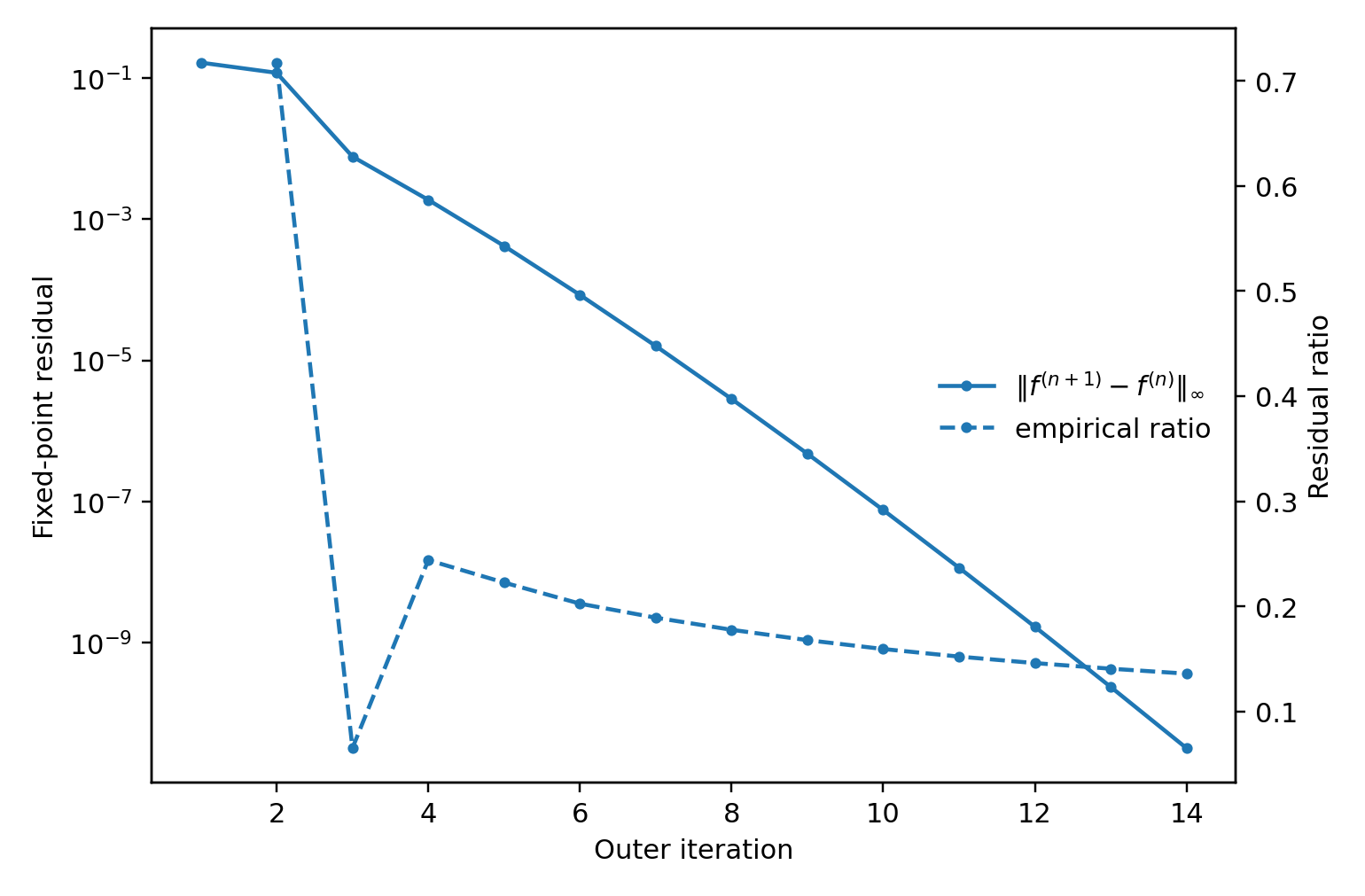}
\caption{Fixed-point residual and empirical residual ratio for the baseline problem.}\label{fig:residual}
\end{figure}

\subsection{Spatial convergence and the far-field boundary}

Table~\ref{tab:grid} compares four spatial grids. The reference is the $N+1=1601$ solution evaluated on a common set of points. The observed error decreases monotonically and the empirical rate is consistent with at least first-order behavior. The rate is somewhat larger than one between the two finer non-reference grids; this is not surprising because the solution is smooth away from the free boundary while the global discretization is only first-order monotone near convection-dominated regions and the obstacle.

\begin{table}[H]
\centering
\caption{Spatial grid convergence for the baseline problem.}\label{tab:grid}
\begin{tabular}{rrrrrr}
\toprule
Grid points & $\Delta x$ & $L^\infty$ error & EOC & $x^*$ & outer iterations\\
\midrule
201  & 0.080 & $7.20\times10^{-4}$ & --   & 0.16 & 15\\
401  & 0.040 & $3.06\times10^{-4}$ & 1.23 & 0.16 & 13\\
801  & 0.020 & $1.01\times10^{-4}$ & 1.60 & 0.18 & 13\\
1601 & 0.010 & reference & -- & 0.18 & 12\\
\bottomrule
\end{tabular}
\end{table}

\begin{figure}[H]
\centering
\includegraphics[width=.76\textwidth]{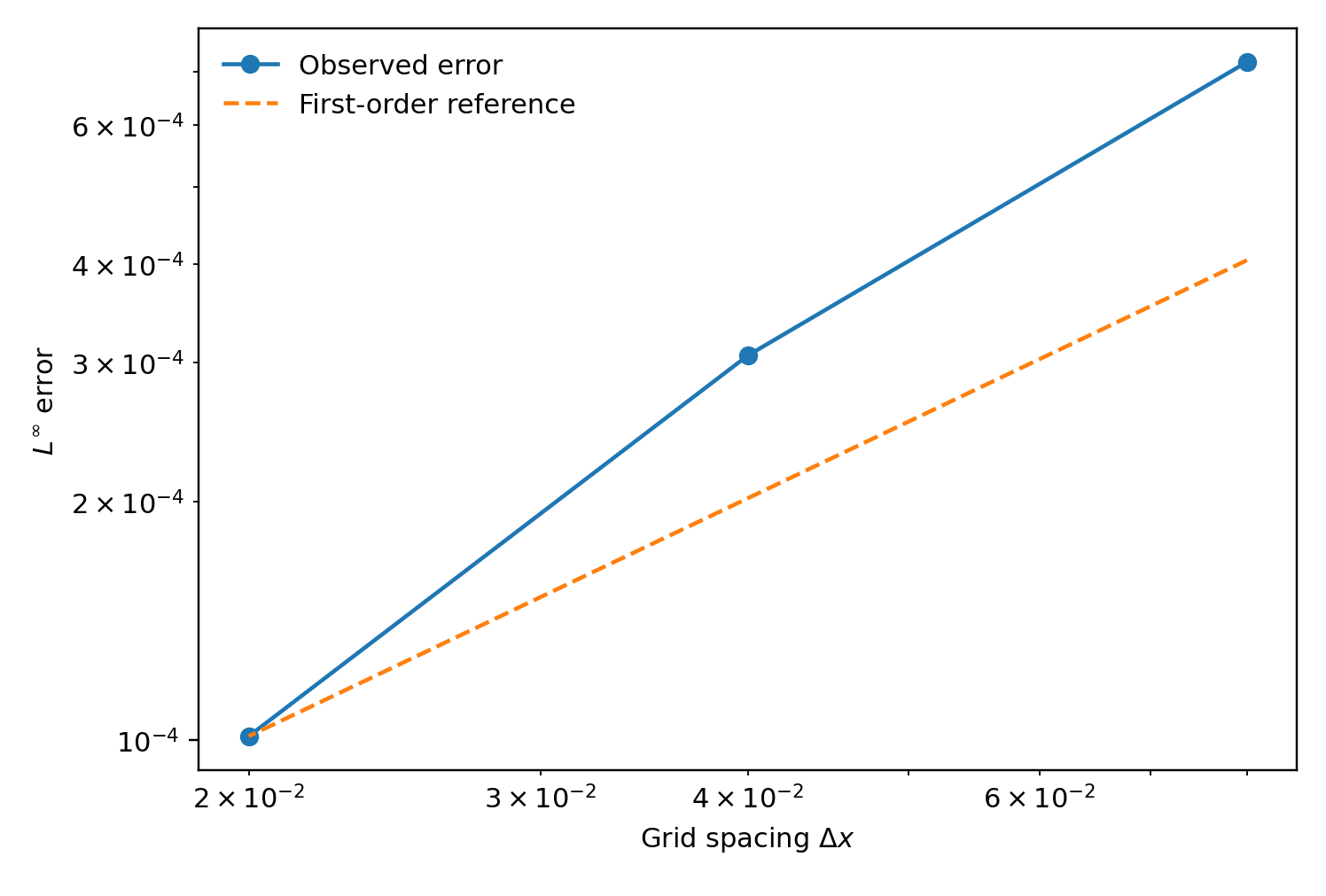}
\caption{Observed $L^\infty$ grid error compared with a first-order reference slope.}\label{fig:gridconv}
\end{figure}

The upper boundary condition is much more important than it may first appear. Figure~\ref{fig:boundary} compares the asymptotic Robin condition with a zero Dirichlet condition imposed at $y_{\max}=2$. Near the selling boundary both computations are almost indistinguishable, but the zero boundary forces the value down as the state approaches the truncation point. Table~\ref{tab:boundary} quantifies this effect against a larger-domain Robin reference. At $y=2$, the zero boundary makes the option value exactly zero even though the reference value is about $0.136$. The asymptotic boundary retains most of that value.

\begin{figure}[H]
\centering
\includegraphics[width=.84\textwidth]{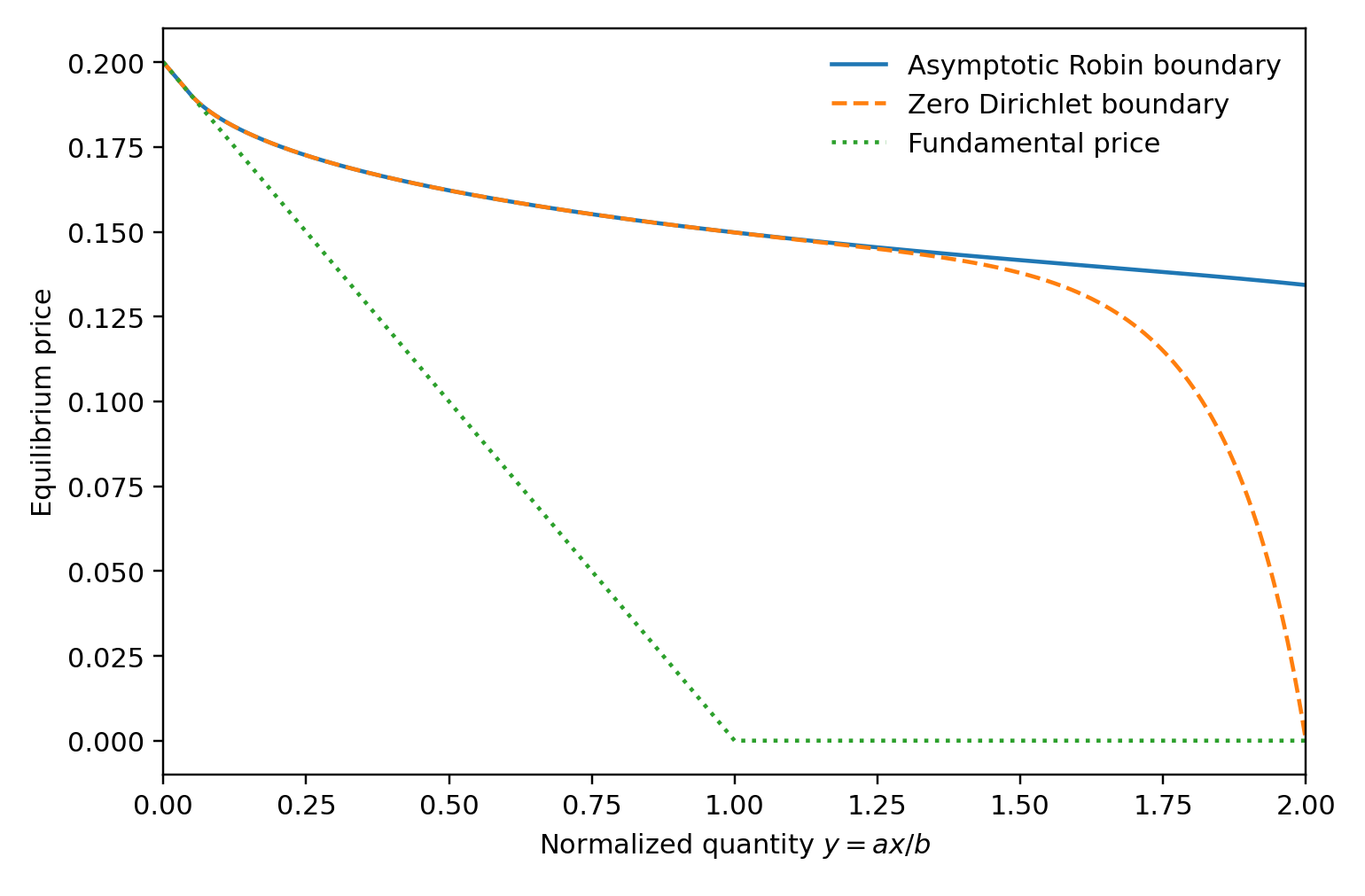}
\caption{Sensitivity to the upper state boundary. A zero Dirichlet value can materially understate the perpetual storage option.}\label{fig:boundary}
\end{figure}

\begin{table}[H]
\centering
\caption{Boundary-condition sensitivity. The reference uses the Robin condition on the larger domain $y_{\max}=4$.}\label{tab:boundary}
\begin{tabular}{rrrrrr}
\toprule
$y$ & Reference & Robin, $y_{\max}=2$ & Abs. error & Zero BC, $y_{\max}=2$ & Abs. error\\
\midrule
0.5 & 0.162240 & 0.162220 & $1.99\!\times\!10^{-5}$ & 0.162220 & $1.99\!\times\!10^{-5}$\\
1.0 & 0.149856 & 0.149837 & $1.86\!\times\!10^{-5}$ & 0.149812 & $4.35\!\times\!10^{-5}$\\
1.5 & 0.141732 & 0.141662 & $6.95\!\times\!10^{-5}$ & 0.137897 & $3.84\!\times\!10^{-3}$\\
2.0 & 0.135641 & 0.134329 & $1.31\!\times\!10^{-3}$ & 0.000000 & $1.36\!\times\!10^{-1}$\\
\bottomrule
\end{tabular}
\end{table}

This calculation also explains why a numerical equilibrium can appear to approach zero too quickly when a short state interval is used. The perpetual option has a slow power-law tail, and that tail is part of the economics of waiting. State truncation should therefore be treated as a numerical approximation rather than as an implicit economic assumption that the option becomes worthless above a selected inventory level.

\subsection{Volatility, discounting, depreciation, and drift}

Figure~\ref{fig:sigma} shows equilibrium curves for three values of $\sigma$. Higher volatility raises the storage value, particularly in the continuation region. The result is consistent with the option interpretation, but the endogenous state dynamics make the comparison less trivial than a direct appeal to a standard American put: changing $\sigma$ also changes the distribution of the demand state under the equilibrium price function.

\begin{figure}[H]
\centering
\includegraphics[width=.84\textwidth]{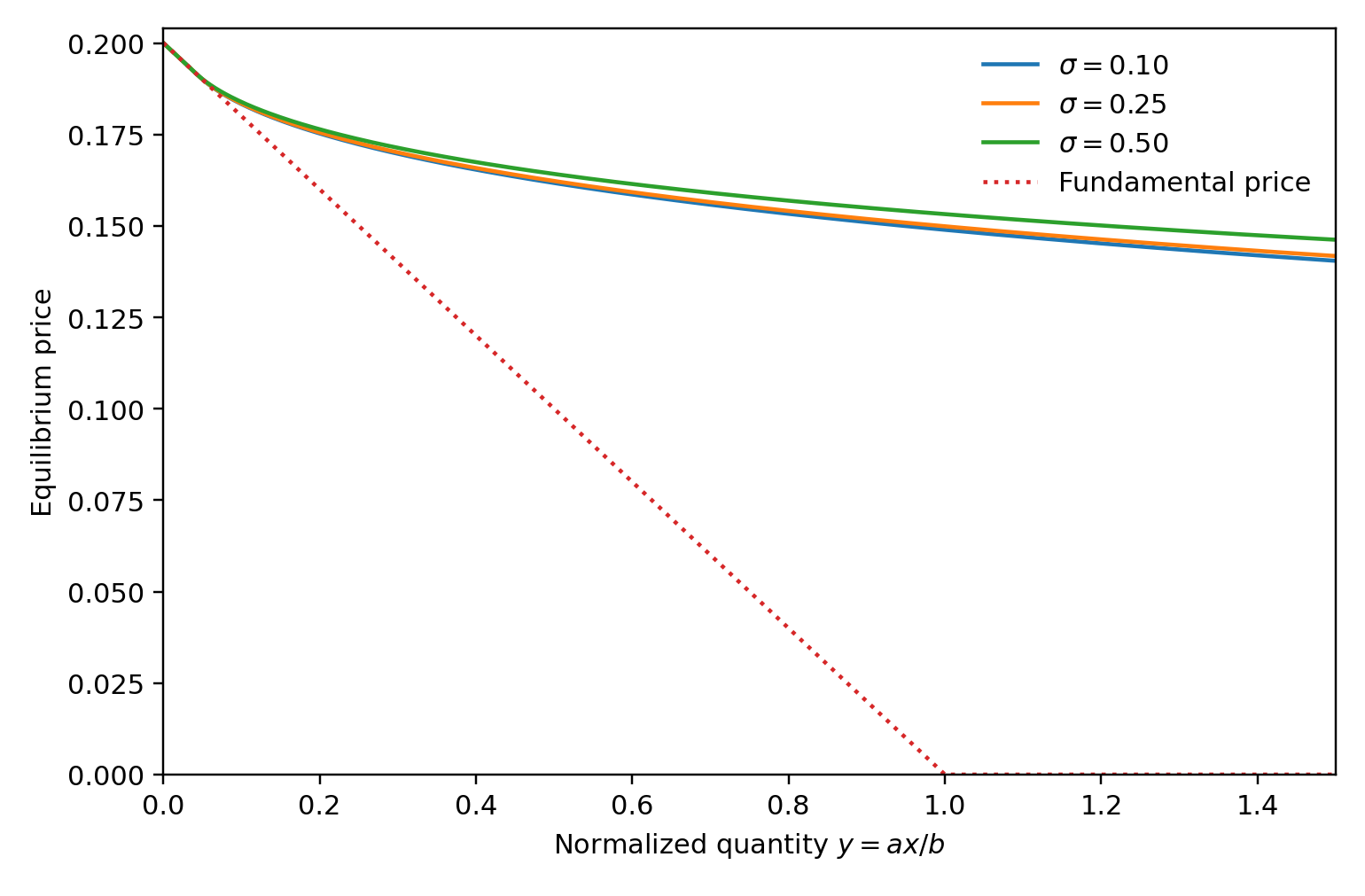}
\caption{Volatility sensitivity of the ESOE price.}\label{fig:sigma}
\end{figure}

The selling boundary and normalized selling price are shown over a wider volatility range in Figure~\ref{fig:sigmathreshold}. The boundary changes in small discrete steps because it is identified on the numerical grid. The more robust pattern is the increase of the storage premium as volatility rises. At $y=1$, the normalized premium rises from approximately $0.744$ at $\sigma=0.05$ to approximately $0.791$ at $\sigma=0.75$.

\begin{figure}[H]
\centering
\includegraphics[width=.82\textwidth]{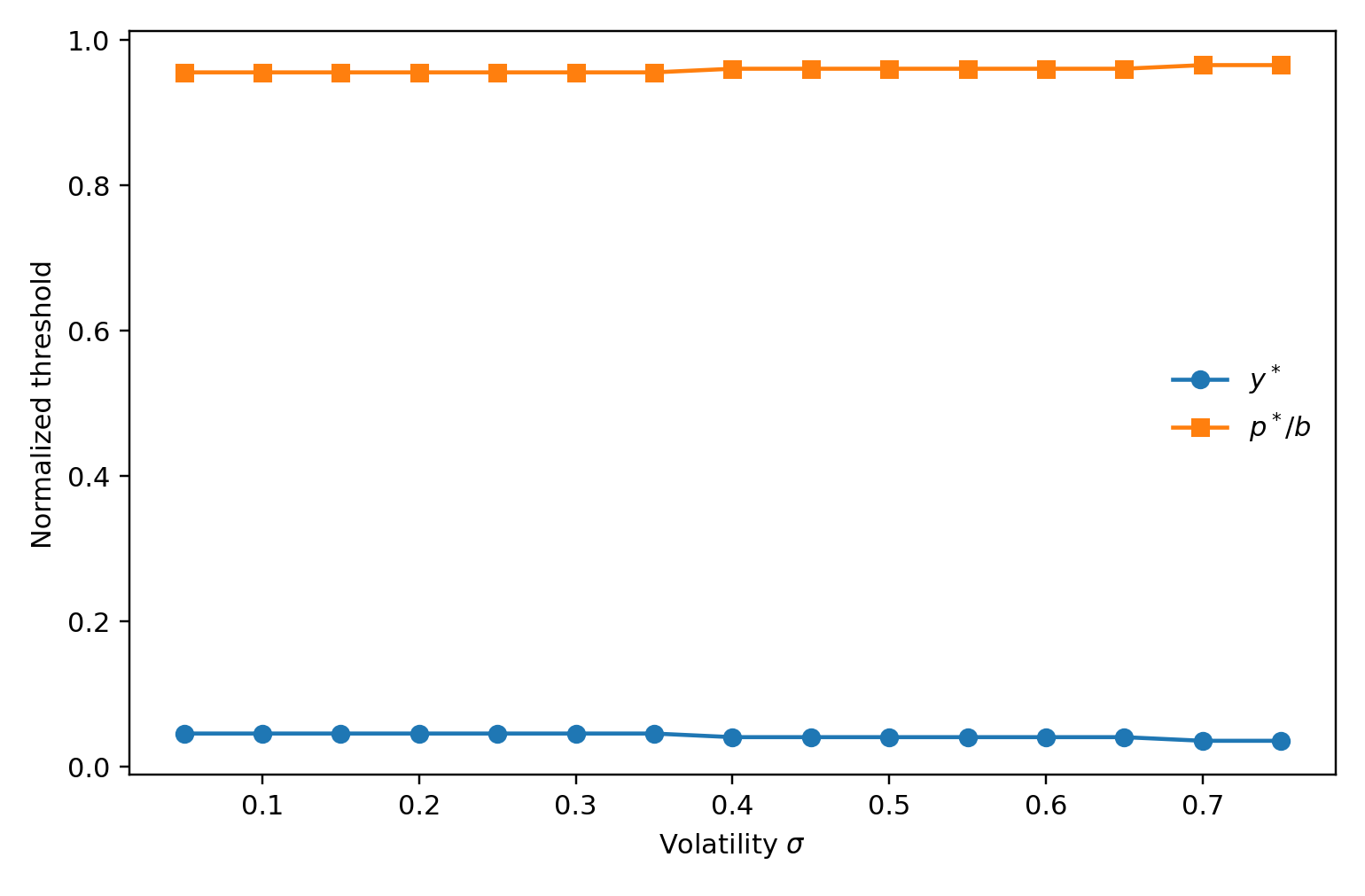}
\caption{Normalized selling boundary $y^*$ and selling price $p^*/b$ as functions of volatility.}\label{fig:sigmathreshold}
\end{figure}

Figure~\ref{fig:premiumheat} gives a two-dimensional view of the same effect. The option premium is zero in the immediate-sale region and rises sharply once the state moves into the continuation region. For $y\ge1$, all value comes from the option to wait because the contemporaneous fundamental price is zero.

\begin{figure}[H]
\centering
\includegraphics[width=.84\textwidth]{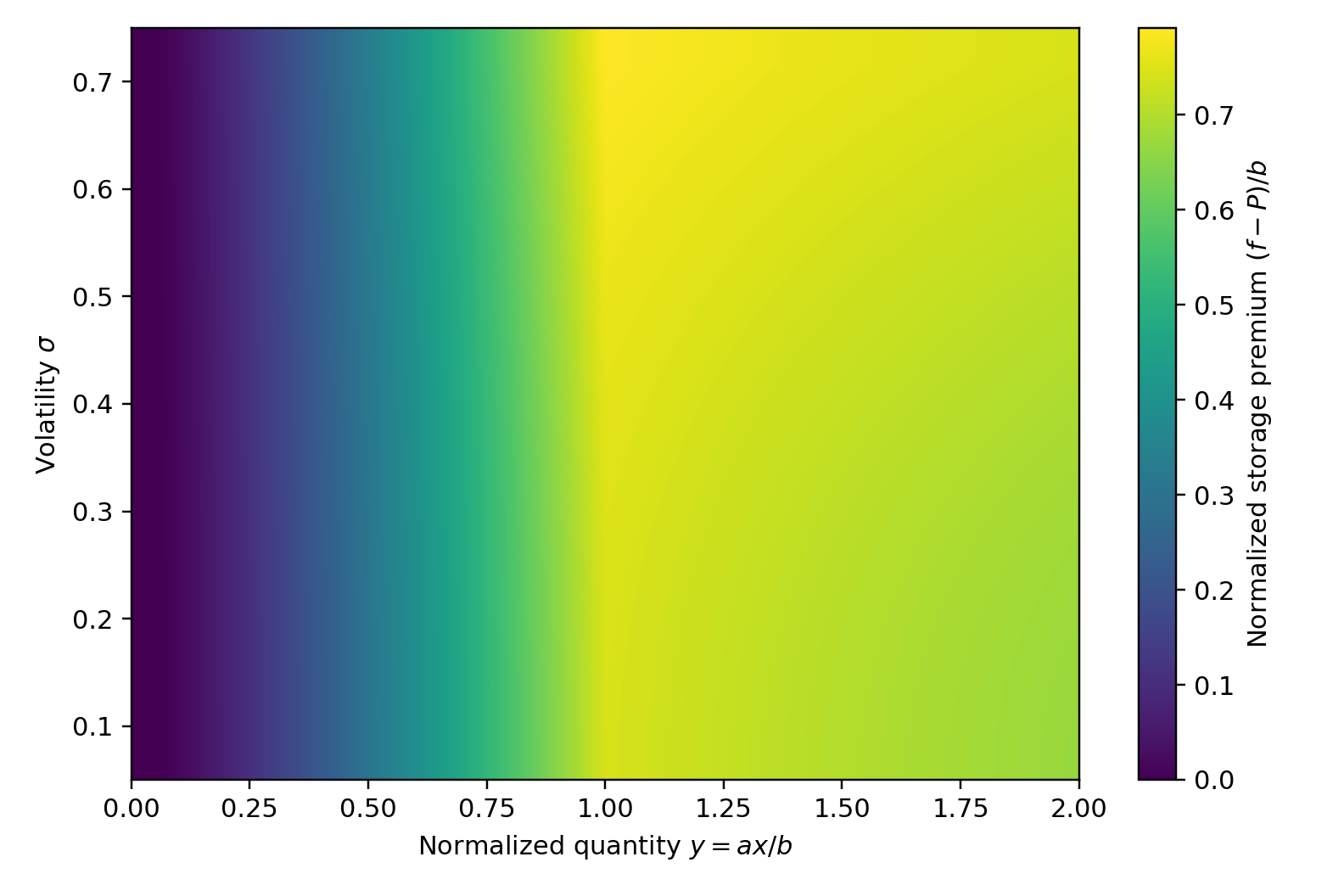}
\caption{Normalized storage premium $(f-P)/b$ over quantity and volatility.}\label{fig:premiumheat}
\end{figure}

Table~\ref{tab:sens} summarizes several one-at-a-time sensitivity experiments. Discounting has the clearest influence on the selling threshold. Increasing $r$ from $0.02$ to $0.10$ moves the normalized boundary from about $0.013$ to $0.087$ and reduces the premium at $y=1$. A high discount rate penalizes waiting and therefore makes immediate sale optimal over a larger range of states. Higher depreciation has a different effect in this model. It accelerates the reduction of the quantity state and makes a future high-price state easier to reach; the storage premium therefore increases modestly with $\delta$. A larger $m$ works in the opposite direction by supporting the quantity state and reducing the value of waiting.

\begin{table}[H]
\centering
\caption{One-at-a-time comparative statics. Premium is reported at $y=1$ and normalized by $b$.}\label{tab:sens}
\begin{tabular}{llrrr}
\toprule
Parameter & Value & $y^*$ & $p^*/b$ & $(f-P)/b$ at $y=1$\\
\midrule
$r$ & 0.02 & 0.013 & 0.987 & 0.848\\
$r$ & 0.05 & 0.047 & 0.953 & 0.750\\
$r$ & 0.10 & 0.087 & 0.913 & 0.651\\
\addlinespace
$\delta$ & 0.08 & 0.047 & 0.953 & 0.739\\
$\delta$ & 0.15 & 0.047 & 0.953 & 0.750\\
$\delta$ & 0.25 & 0.047 & 0.953 & 0.763\\
\addlinespace
$m$ & $-0.05$ & 0.040 & 0.960 & 0.762\\
$m$ & 0.00 & 0.047 & 0.953 & 0.750\\
$m$ & 0.05 & 0.047 & 0.953 & 0.737\\
\addlinespace
$\sigma$ & 0.10 & 0.047 & 0.953 & 0.745\\
$\sigma$ & 0.25 & 0.047 & 0.953 & 0.750\\
$\sigma$ & 0.50 & 0.040 & 0.960 & 0.767\\
\bottomrule
\end{tabular}
\end{table}

Figure~\ref{fig:rdelta} combines $r$ and $\delta$. In the baseline range, the normalized threshold is dominated by discounting and is comparatively insensitive to depreciation. This does not mean that $\delta$ is economically irrelevant: Table~\ref{tab:sens} shows that it affects the continuation value even when the threshold changes little.

\begin{figure}[H]
\centering
\includegraphics[width=.80\textwidth]{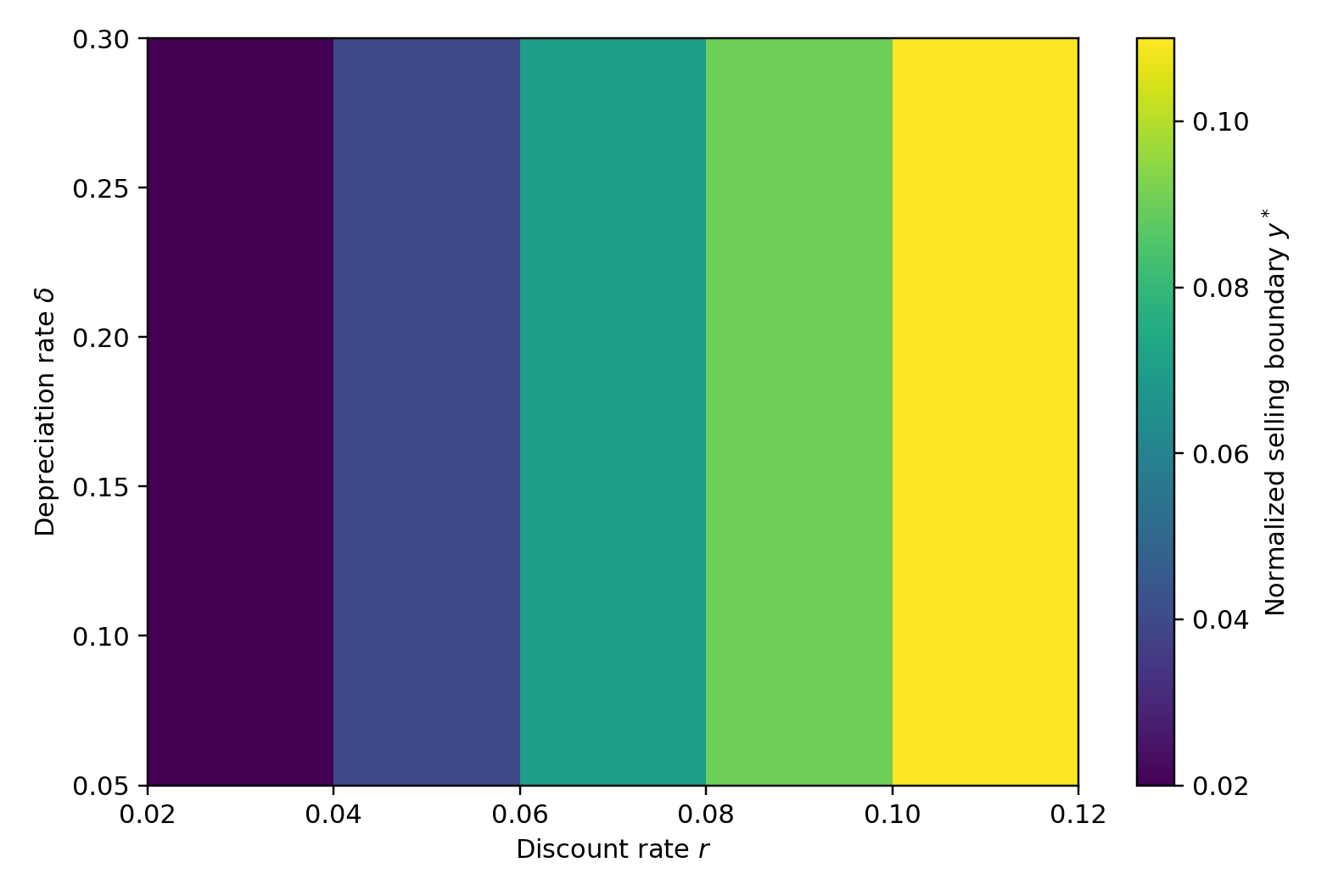}
\caption{Normalized selling boundary over discount and depreciation rates.}\label{fig:rdelta}
\end{figure}

\subsection{Demand scaling and elasticity}

The old numerical study suggested that changing $a$ and $b$ mainly rescales the equilibrium. Figure~\ref{fig:scaling} makes that observation precise numerically. Four very different pairs $(a,b)$ collapse onto the same curve after the normalization $y=ax/b$ and $f/b$. The corresponding normalized threshold is approximately $0.045$ in all four cases and $p^*/b\approx0.955$.

\begin{figure}[H]
\centering
\includegraphics[width=.84\textwidth]{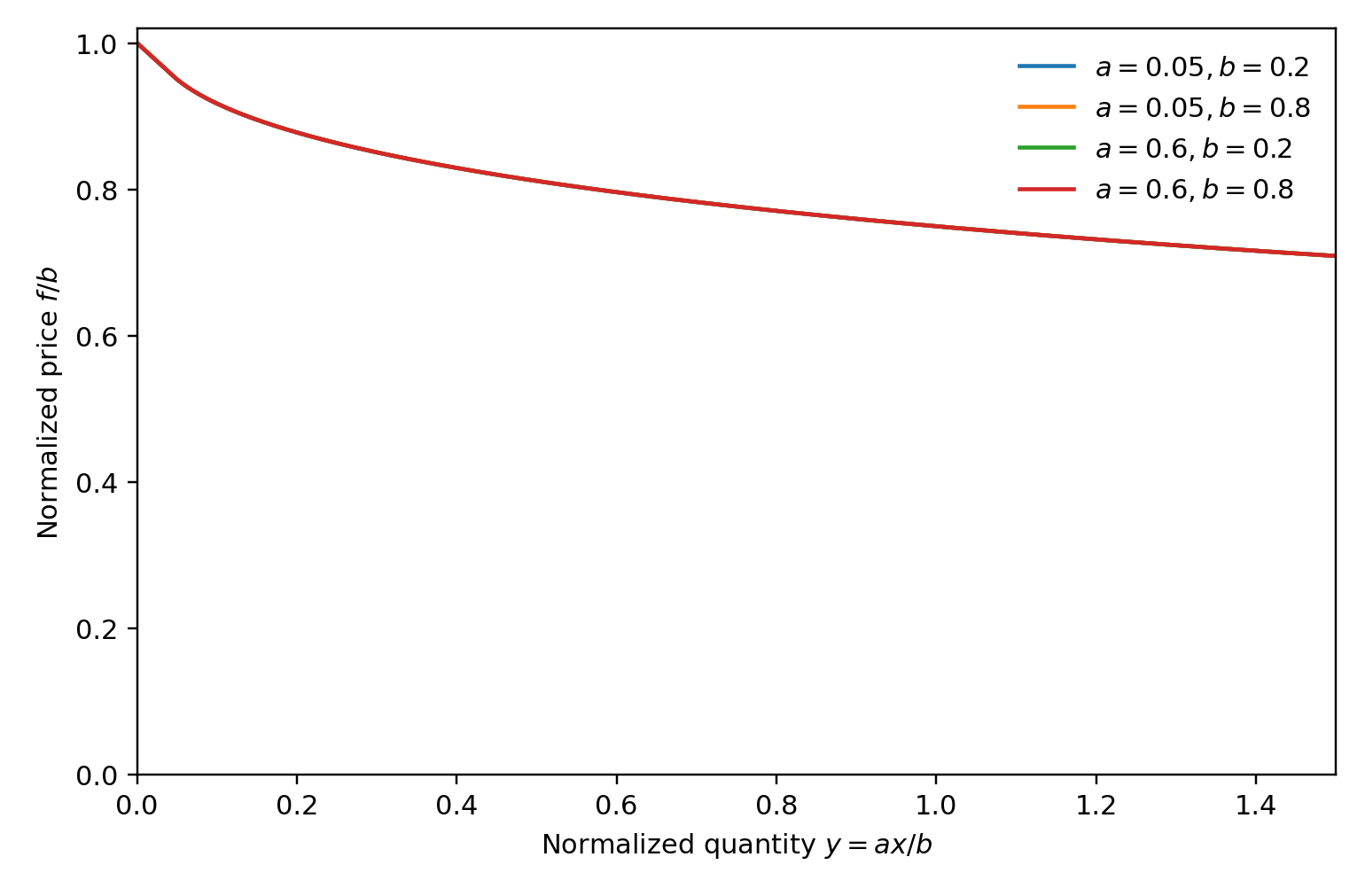}
\caption{Scaling collapse of the equilibrium for four pairs $(a,b)$.}\label{fig:scaling}
\end{figure}

The scaling can also be seen analytically. Let $x=(b/a)y$ and $f(x)=bF(y)$. Then $Q(f(x))=(b/a)(1-F(y))$ whenever $0\le F\le1$, and the state dynamics in normalized units depend on $r,\delta,m,\sigma$ but not separately on $a$ and $b$. The obstacle becomes $(1-y)^+$. Thus the linear-demand parameters determine the units of quantity and price, while the normalized equilibrium shape is driven primarily by the dynamic parameters. This property is useful when comparing commodities whose demand curves differ mostly by scale.

\subsection{Monte Carlo validation}

The free-boundary representation gives an independent way to check the numerical equilibrium. For $x>x^*$ we simulate the equilibrium diffusion and stop at the first passage time \eqref{eq:taustar}. The Monte Carlo estimator is
\[
\widehat f_{MC}(x)=\frac1M\sum_{j=1}^{M}e^{-r\tau_j^*}p^*.
\]
The simulation uses a positivity-preserving log-Euler step with the equilibrium drift frozen over each time interval. Table~\ref{tab:mc} compares this estimator with the finite-difference equilibrium. The largest absolute difference among the four reported states is below $4.3\times10^{-4}$. The differences are small relative to the value level and consistent with the combined Monte Carlo and time-discretization error.

\begin{table}[H]
\centering
\caption{Monte Carlo validation of the equilibrium value. Standard errors refer to the Monte Carlo estimator.}\label{tab:mc}
\begin{tabular}{rrrrrr}
\toprule
$x_0$ & $y_0$ & Grid value & MC value & MC s.e. & Abs. difference\\
\midrule
0.5 & 0.125 & 0.181052 & 0.180739 & $3.32\times10^{-5}$ & $3.13\times10^{-4}$\\
1.0 & 0.250 & 0.172611 & 0.172235 & $4.99\times10^{-5}$ & $3.75\times10^{-4}$\\
4.0 & 1.000 & 0.149885 & 0.149459 & $9.71\times10^{-5}$ & $4.26\times10^{-4}$\\
8.0 & 2.000 & 0.135668 & 0.135387 & $1.24\times10^{-4}$ & $2.81\times10^{-4}$\\
\bottomrule
\end{tabular}
\end{table}

\begin{figure}[H]
\centering
\includegraphics[width=.74\textwidth]{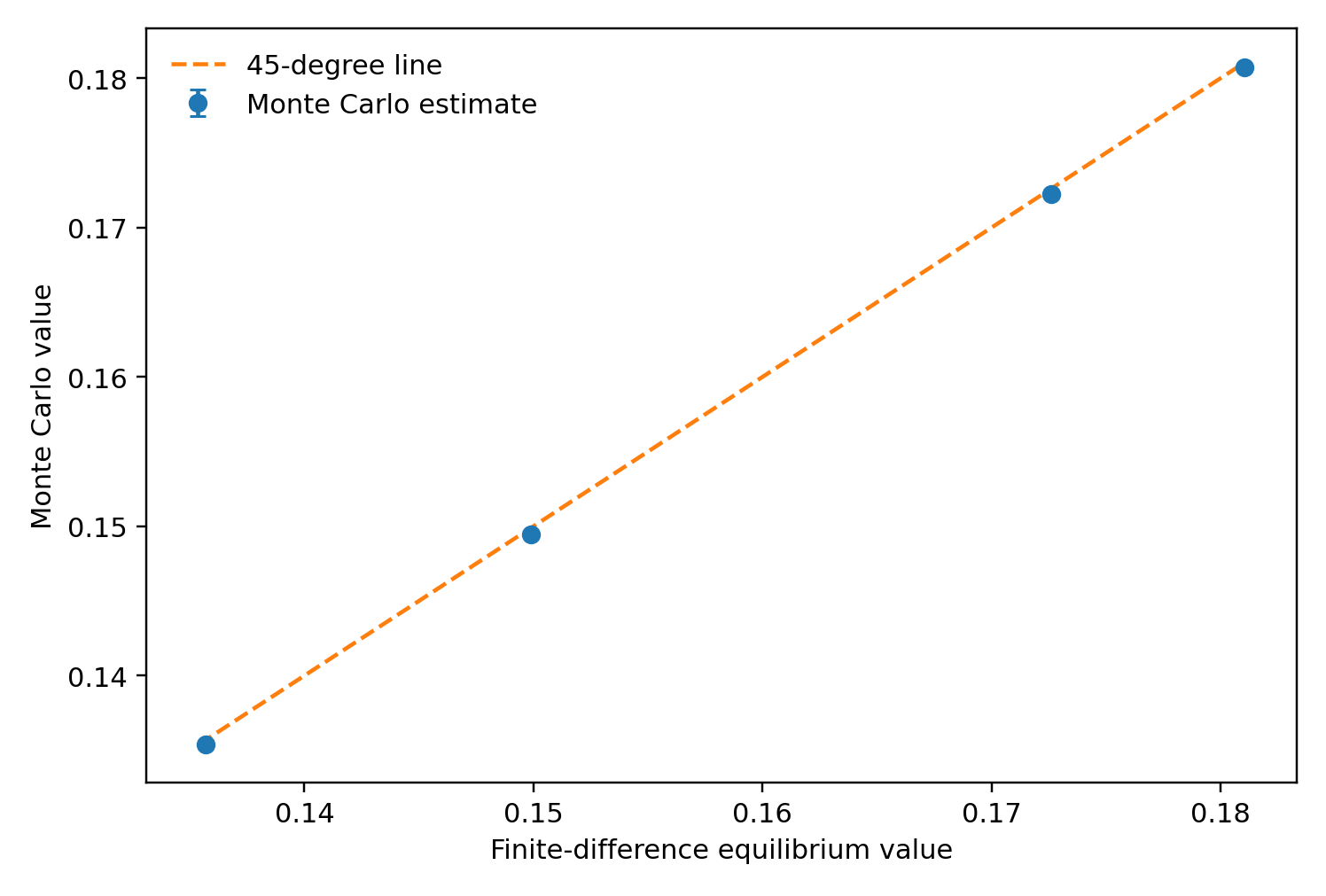}
\caption{Finite-difference values versus Monte Carlo first-passage estimates. Error bars are approximate 95\% Monte Carlo intervals.}\label{fig:mc}
\end{figure}

This validation is especially useful in the right tail. A zero boundary at $X_{\max}$ mechanically prevents the numerical solution from representing paths that start near the truncation point and later return to the selling region. The first-passage estimator has no such restriction when the simulated state is allowed to move beyond the finite-difference grid and the equilibrium function is continued with the power-law asymptotic form. The agreement in Table~\ref{tab:mc} therefore provides direct support for the far-field treatment.

\subsection{Distribution of the optimal selling time}

The equilibrium provides more than a price curve. Once $x^*$ is known, it gives a distribution for the time at which the commodity is optimally sold. Figure~\ref{fig:survival} shows empirical survival functions of $\tau^*$ for a common normalized initial state $y_0=0.5$ and three volatility levels. Table~\ref{tab:hittime} reports selected quantiles.

\begin{figure}[H]
\centering
\includegraphics[width=.80\textwidth]{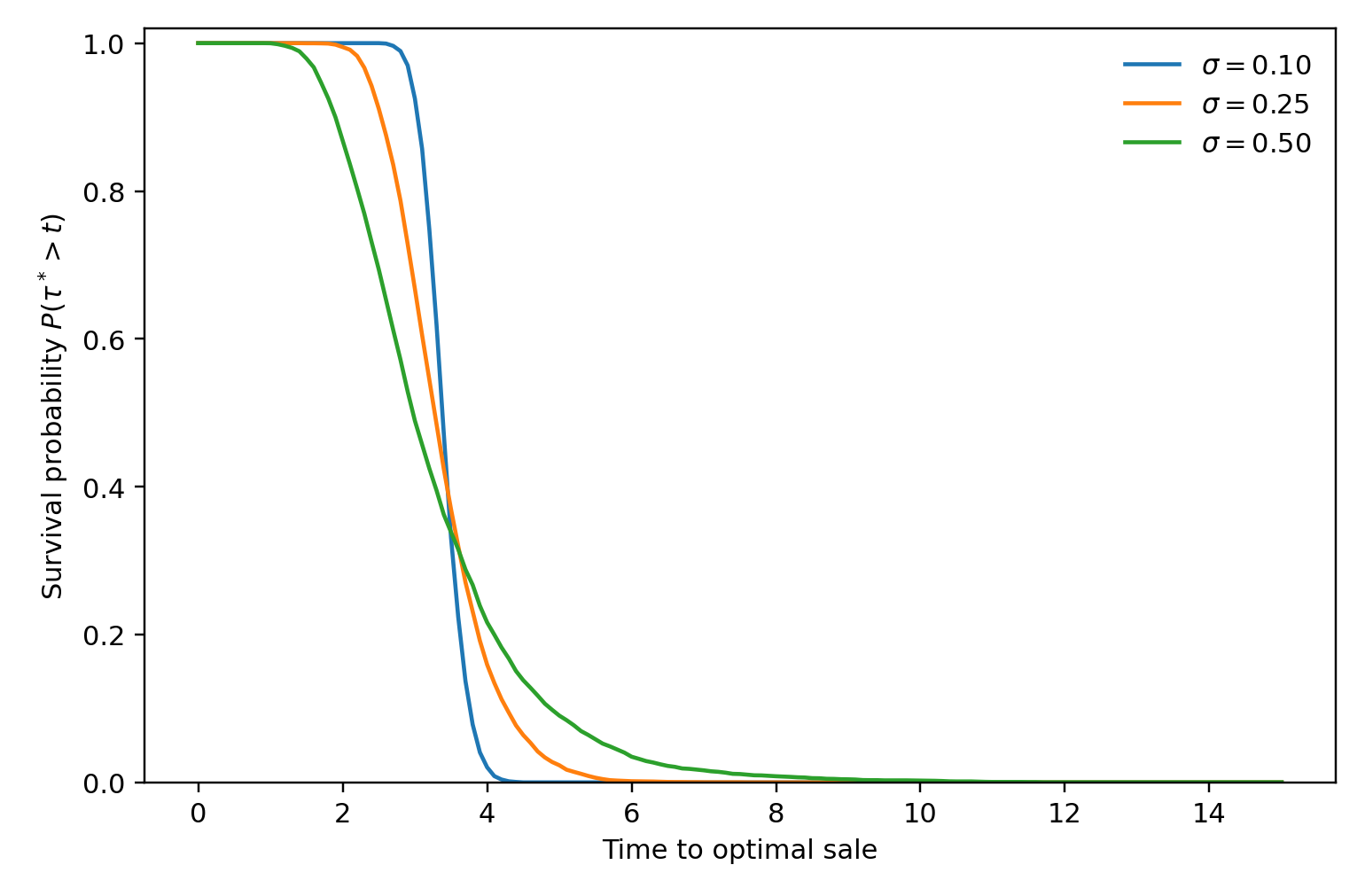}
\caption{Empirical survival probability of the optimal selling time from $y_0=0.5$.}\label{fig:survival}
\end{figure}

\begin{table}[H]
\centering
\caption{Monte Carlo summary of the optimal selling time from $y_0=0.5$.}\label{tab:hittime}
\begin{tabular}{rrrrrr}
\toprule
$\sigma$ & Mean & 25\% & Median & 75\% & 90\%\\
\midrule
0.10 & 3.393 & 3.198 & 3.381 & 3.571 & 3.762\\
0.25 & 3.353 & 2.865 & 3.270 & 3.754 & 4.262\\
0.50 & 3.246 & 2.349 & 2.968 & 3.865 & 4.873\\
\bottomrule
\end{tabular}
\end{table}

The mean selling time declines moderately with volatility, but the more visible effect is dispersion. At $\sigma=0.10$ the central part of the distribution is narrow. At $\sigma=0.50$ the lower quartile occurs much earlier, while the upper quantiles extend considerably farther into the future. This is useful economically: volatility increases the option value not because every path waits longer, but because it creates a wider set of possible timing outcomes and more upside from favorable state movements.

\section{Discussion}

The revised formulation changes several interpretations of the old numerical manuscript. The first concerns existence. A fixed point on a truncated interval is not automatically the restriction of a fixed point computed on a larger interval, because the value inside the smaller interval may depend on paths that leave it. The compact-open proof of Theorem~\ref{thm:existence} removes that issue. It also makes clear why a common Lipschitz bound is useful: it provides both SDE regularity and the compactness needed by the global fixed-point theorem.

The second concerns uniqueness. Existence and uniqueness should not be combined in a theorem unless both are actually proved under the stated assumptions. Theorem~\ref{thm:existence} therefore makes no uniqueness statement. If a parameter set lies in a contraction regime, uniqueness follows from the contraction result, and the outer iteration is justified by Banach's theorem. If the contraction condition is not available, the existence result remains valid under $m\le r+\delta$, but the possibility of multiple fixed points should not be ruled out without further analysis.

The third concerns numerical convergence. The number of outer iterations is not a numerical order of accuracy. The spatial convergence experiment in Table~\ref{tab:grid} measures the approximation of the obstacle operator and is approximately first order. The residual plot in Figure~\ref{fig:residual} measures the convergence of the fixed-point iteration. Proposition~\ref{prop:error} explains how the two errors combine. This separation is important when numerical results are used as evidence for a theorem or when a tolerance is selected for practical computation.

The fourth concerns state truncation. The storage option is perpetual and the demand diffusion can return from a large state to the selling region. A zero upper boundary therefore has an economic meaning that is not part of the original model: it says that the storage option becomes worthless exactly at the chosen truncation point. The asymptotic Robin condition avoids imposing that extra assumption. The large difference in Table~\ref{tab:boundary} shows that this is not only a technical refinement.

Finally, the expanded numerical analysis clarifies which economic parameters drive which features of the equilibrium. The linear demand parameters $a$ and $b$ mainly set the price and quantity scales. Discounting determines the willingness to wait and therefore has a strong effect on the selling boundary. Depreciation and drift change the speed at which the state moves toward or away from the high-price region. Volatility raises the storage premium and broadens the selling-time distribution. These distinctions are more informative than comparing a few equilibrium curves without normalizing the state or reporting the threshold.

\section{Conclusion}\label{sec:conclusion}

We have developed a continuous-time model of storable commodity prices in which the price is generated by the storage decision rather than imposed as an exogenous stochastic process. A candidate speculative price determines the demand diffusion, and the perpetual optimal-stopping value of that diffusion determines a new price function. The fixed point of this feedback is called an endogenous storage-option equilibrium (ESOE) in this paper.

The main theoretical result is a global existence theorem on a compact admissible class. The proof uses the topology of locally uniform convergence and avoids an interval-by-interval construction of local fixed points. The storage operator preserves boundedness, monotonicity, and the natural Lipschitz constant of the inverse-demand curve when $m\le r+\delta$. Continuity follows from SDE stability and an exponentially small infinite-horizon tail. Schauder--Tychonoff then gives existence. Uniqueness is deliberately kept separate and follows only when an additional contraction condition is available.

The equilibrium is a nonlinear free-boundary problem. We approximate it by solving a stationary linear obstacle problem for each current outer iterate, using a monotone finite-difference operator and policy iteration. An asymptotic Robin condition replaces the artificial zero upper boundary and preserves the slow decay of the perpetual waiting value. The numerical experiments distinguish spatial convergence from fixed-point convergence and show stable first-order behavior. Monte Carlo first-passage values provide an independent validation.

The comparative statics show that the storage premium is strongly affected by discounting and volatility, while the linear inverse-demand parameters largely act through simple scale changes. The selling threshold gives an immediate economic interpretation of the equilibrium, and the first-passage distribution adds information about the timing risk of optimal sales. These features make the model useful not only as a theoretical continuous-time extension of competitive storage, but also as a practical framework for studying endogenous commodity prices and storage decisions.

\appendix

\section{Details of the continuity argument}\label{app:continuity}

This appendix gives additional details for Lemma~\ref{lem:Tcont}. Let $K,T>0$ and let $h_n\to h$ locally uniformly in $\mathcal{H}$. Because all functions in $\mathcal{H}$ are $a$-Lipschitz and bounded by $b$, the drifts satisfy a uniform global Lipschitz estimate
\[
|\mu_{h_n}(x)-\mu_{h_n}(y)|\le L_\mu|x-y|,
\]
with $L_\mu$ independent of $n$. They also satisfy a common linear-growth bound. Moreover, $\mu_{h_n}\to\mu_h$ uniformly on every compact interval.

Let $X^{n,x}$ and $X^x$ denote the solutions driven by the same Brownian motion. For $R>K$ define the stopping time
\[
\eta_R=\inf\{t\ge0:X_t^{n,x}\vee X_t^x\ge R\}.
\]
On $[0,T\wedge\eta_R]$, standard SDE estimates and Gronwall's inequality give
\[
\sup_{0\le x\le K}\mathbb{E}\left[
\sup_{0\le t\le T\wedge\eta_R}|X_t^{n,x}-X_t^x|^2\right]
\longrightarrow0.
\]
Uniform moment bounds and Markov's inequality show that the probability of $\eta_R\le T$ can be made uniformly small by first taking $R$ large. Hence the same convergence holds in probability on $[0,T]$, uniformly for $x\in[0,K]$.

For the finite-horizon stopping problems, one may use the Snell-envelope stability theorem for bounded continuous rewards. A direct discrete-time approximation gives the same conclusion: approximate the set of stopping times by a fine deterministic time grid, use convergence of the finite-dimensional distributions and boundedness of $P$, and then let the time mesh go to zero. The result is
\[
\sup_{0\le x\le K}|(\mathcal{T}_T h_n)(x)-(\mathcal{T}_T h)(x)|\to0.
\]

It remains to pass from finite to infinite horizon. For any stopping time $\tau$, split the payoff according to $\{\tau\le T\}$ and $\{\tau>T\}$. The first part is bounded by the finite-horizon value, while the second is at most $be^{-rT}$. Thus
\[
\mathcal{T}_T h\le\mathcal{T} h\le\mathcal{T}_T h+be^{-rT},
\]
uniformly in $h\in\mathcal{H}$ and $x\ge0$. The same estimate holds for $h_n$. Therefore
\begin{align*}
\sup_{0\le x\le K}|\mathcal{T} h_n-\mathcal{T} h|
&\le 2be^{-rT}
+\sup_{0\le x\le K}|\mathcal{T}_T h_n-\mathcal{T}_T h|.
\end{align*}
First choose $T$ large and then $n$ large.

\section{Discrete obstacle operator}\label{app:fd}

For completeness, the coefficients of the tridiagonal continuation operator are written explicitly. Let $M_h=rI-\mathcal A_h^{\Delta x}$. For an interior node $i$, define $d_i=\frac12\sigma^2x_i^2/(\Delta x)^2$.

If $\mu_i\ge0$, the backward Kolmogorov generator is discretized with a forward difference for the first derivative:
\[
(\mathcal A_h^{\Delta x}V)_i
=d_iV_{i-1}+\left(-2d_i-\frac{\mu_i}{\Delta x}\right)V_i
+\left(d_i+\frac{\mu_i}{\Delta x}\right)V_{i+1}.
\]
Hence
\[
(M_hV)_i=-d_iV_{i-1}
+\left(r+2d_i+\frac{\mu_i}{\Delta x}\right)V_i
-\left(d_i+\frac{\mu_i}{\Delta x}\right)V_{i+1}.
\]

If $\mu_i<0$, a backward difference gives
\[
(\mathcal A_h^{\Delta x}V)_i
=\left(d_i-\frac{\mu_i}{\Delta x}\right)V_{i-1}
+\left(-2d_i+\frac{\mu_i}{\Delta x}\right)V_i+d_iV_{i+1},
\]
and therefore
\[
(M_hV)_i=-\left(d_i-\frac{\mu_i}{\Delta x}\right)V_{i-1}
+\left(r+2d_i-\frac{\mu_i}{\Delta x}\right)V_i-d_iV_{i+1}.
\]
In both cases the off-diagonal entries of $M_h$ are non-positive and the diagonal is positive. The row is strictly diagonally dominant by $r>0$. This sign structure is the reason for the apparently reversed upwind choice relative to a forward transport equation: the PDE here is a backward generator equation.

The Robin condition \eqref{eq:robin} is discretized as
\[
X_{\max}\frac{V_N-V_{N-1}}{\Delta x}=\xi_-V_N,
\]
or
\[
-V_{N-1}+\left(1-\xi_-\frac{\Delta x}{X_{\max}}\right)V_N=0.
\]
Since $\xi_-<0$, the coefficient of $V_N$ is larger than one.

Policy iteration selects one of two equations at every interior node:
\[
V_i=P_i\quad\text{or}\quad (M_hV)_i=0.
\]
The selected linear system is solved, the complementarity residual is recomputed, and the policy is updated. Once the active set no longer changes, the discrete obstacle problem has been solved to the linear-system tolerance.

\section{Additional numerical tables}\label{app:tables}

Table~\ref{tab:sigextra} reports the wider volatility experiment used in Figures~\ref{fig:sigmathreshold} and \ref{fig:premiumheat}. The boundary is grid-valued, so small movements should be interpreted together with the smoother premium column.

\begin{table}[H]
\centering
\caption{Extended volatility sensitivity.}\label{tab:sigextra}
\begin{tabular}{rrrrr}
\toprule
$\sigma$ & $y^*$ & $p^*/b$ & Premium at $y=1$ & Outer iterations\\
\midrule
0.05 & 0.045 & 0.955 & 0.7443 & 11\\
0.10 & 0.045 & 0.955 & 0.7450 & 11\\
0.20 & 0.045 & 0.955 & 0.7477 & 11\\
0.30 & 0.045 & 0.955 & 0.7522 & 11\\
0.40 & 0.040 & 0.960 & 0.7585 & 12\\
0.50 & 0.040 & 0.960 & 0.7663 & 12\\
0.60 & 0.040 & 0.960 & 0.7756 & 12\\
0.70 & 0.035 & 0.965 & 0.7859 & 12\\
0.75 & 0.035 & 0.965 & 0.7913 & 12\\
\bottomrule
\end{tabular}
\end{table}

\end{document}